\documentclass[acmsmall,screen,nonacm]{acmart}

\setcopyright{none}
\makeatletter
\@ifundefined{hyxmp@parse@acmart}{}{\let\hyxmp@parse@acmart\relax}
\makeatother

\usepackage{amsmath}
\usepackage{mathtools}
\usepackage{booktabs}
\usepackage{tabularx}
\usepackage{array}
\usepackage{multirow}
\usepackage{enumitem}
\usepackage{graphicx}
\usepackage{xcolor}
\usepackage{tikz}
\usepackage{url}
\usepackage{microtype}
\usepackage{algorithm}
\usepackage{algpseudocode}
\usepackage{placeins}
\usetikzlibrary{arrows.meta,positioning,fit,calc,shapes.geometric}

\hypersetup{
  hidelinks,
  pdftitle={Authorization Revocation for Long-Running AI Agents: Root-Scoped Quiescence under Delegation and Asynchronous Execution},
  pdfauthor={Genliang Zhu and Chu Wang},
  pdfsubject={Root-scoped authorization quiescence for long-running agents},
  pdfcreator={LaTeX}
}

\setlist[itemize]{leftmargin=1.25em}
\setlist[enumerate]{leftmargin=1.45em}
\newcolumntype{L}[1]{>{\raggedright\arraybackslash}p{#1}}
\newcolumntype{Y}{>{\raggedright\arraybackslash}X}

\definecolor{protocolBlue}{HTML}{2563EB}
\definecolor{protocolBlueFill}{HTML}{EFF6FF}
\definecolor{protocolGreen}{HTML}{16A34A}
\definecolor{protocolGreenFill}{HTML}{F0FDF4}
\definecolor{protocolAmber}{HTML}{D97706}
\definecolor{protocolAmberFill}{HTML}{FFFBEB}
\definecolor{protocolPurple}{HTML}{7C3AED}
\definecolor{protocolPurpleFill}{HTML}{F5F3FF}
\definecolor{protocolRed}{HTML}{DC2626}
\definecolor{protocolRedFill}{HTML}{FEF2F2}

\newtheorem{definition}{Definition}
\newtheorem{lemma}{Lemma}
\newtheorem{theorem}{Theorem}

\newcommand{\mcode}[1]{\ifmmode\text{\normalfont\ttfamily #1}\else{\normalfont\ttfamily #1}\fi}

\newcommand{\Q}{\textsc{Quiescent}}
\newcommand{\NQ}{\textsc{Not-Quiescent}}
\newcommand{\UQ}{\textsc{Indeterminate}}

\title[Authorization Revocation for Long-Running AI Agents]
{Authorization Revocation for Long-Running AI Agents: Root-Scoped Quiescence
under Delegation and Asynchronous Execution}

\author{Genliang Zhu}
\affiliation{%
  \institution{Accentrust}
  \city{Vancouver}
  \country{Canada}
}
\affiliation{%
  \institution{Georgia Institute of Technology}
  \city{Atlanta}
  \state{Georgia}
  \country{USA}
}
\email{research@accentrust.com}

\author{Chu Wang}
\affiliation{%
  \institution{Accentrust}
  \city{Vancouver}
  \country{Canada}
}
\affiliation{%
  \institution{University of Illinois Urbana-Champaign}
  \city{Urbana}
  \state{Illinois}
  \country{USA}
}

\begin{abstract}
Long-running AI agents outlive initiating processes through credentials,
delegated tasks, queues, callbacks, reservations, and provider-side operations.
Cancellation, process exit, and credential revocation neither close every
pre-cut carrier nor distinguish independently authorized shared work. We define
\emph{root-scoped authorization quiescence}: for each manifested sink, a
certificate accounts for every cut-relevant acceptance under the retired
root-epoch atom that precedes its local fence and excludes protected acceptance
under that atom after the fence, while permitting exact rebind to a current,
independently sufficient support.

The root-scoped quiescence protocol linearizes a root cut, fences old-root expansion and
protected sinks, represents alternative and conjunctive authority as antichains
of minimal sufficient root sets, and composes provider-frontier certificates
into a cutset over registered old-root paths. Exact channel-token accounting
reconciles transfers; missing or conflicting evidence remains indeterminate.
Under stated assumptions, we prove post-cut issuer non-expansion,
support-sound projection, compositional soundness under exact channel conservation,
independent-support preservation, merge-order independence, and crash/replay
stability.

A provider-free late-effect test suite matches 17/17 registered outcomes. Two
cancellation-only and one cut-only execution accept the same class of already
scheduled late effect; two cut-plus-fence executions, one restart, and one
stale-process execution reject it. A separately implemented checker verifies
17/17 traces and rejects 44/44 consistently rehashed semantic regressions. The
certificate establishes root-relative authorization quiescence within its
bound manifest and configuration, not global idleness, rollback, or business
completion.
\end{abstract}

\ccsdesc[500]{Security and privacy~Access control}
\ccsdesc[300]{Mathematics of computing~Distributed computing models}
\keywords{AI agents, access control, authorization revocation,
quiescence, distributed systems, cancellation, effect closure}

\begin{document}
\maketitle
\hypersetup{
  pdftitle={Authorization Revocation for Long-Running AI Agents: Root-Scoped Quiescence under Delegation and Asynchronous Execution},
  pdfauthor={Genliang Zhu and Chu Wang},
  pdfsubject={Root-scoped authorization quiescence for long-running agents},
  pdfcreator={LaTeX}
}

\section{Introduction}
\label{sec:introduction}

Long-running agents routinely outlive the interaction that launched them. A
single task can create remote subtasks, durable queue entries, scheduled jobs,
webhooks, delegated credentials, reservations, and provider-side operations.
Stopping the local process therefore need not stop the task's authority. Even
an acknowledged cancellation can race with a remote completion, and a revoked
token can leave a prepared operation or separately delegated credential able
to commit. The operational question is not merely whether a controller sent a
stop request. It is whether the retired authority has lost every remaining path
to a future protected effect and whether every cut-relevant old-atom frontier
crossing is terminally accounted.

Existing interfaces expose observations that are useful but individually
insufficient for that conclusion. MCP task cancellation and A2A cancellation
are cooperative control transitions rather than proofs that every underlying
effect has stopped \cite{mcp2026tasks,a2a2026spec}. OAuth token revocation
invalidates a named token at the authorization server, subject to the
deployment's propagation and mediation semantics \cite{rfc7009}. Process exit
establishes local process termination. Provider-local closure can establish
that one provider has no unresolved effect frontier. None of these observations
alone accounts for a pre-cut message already accepted by another provider, a
credential derived before revocation, or a shared agent that remains valid
under an independent authorization root.

This paper studies \emph{root-scoped authorization quiescence}. Let $r$ be an
authorization root and $c$ a durable cut targeting $(r,e)$ that retires root epoch $e$ and
advances issuer state to $e+1$. A quiescence certificate for $c$ asserts that
every registered carrier and obligation whose selected support depends on the
retired atom is terminated, effect-closed, exactly accounted as pre-fence
committed, or atomically rebound to an independently sufficient witness that
excludes the retired atom; every cut-relevant acceptance under that atom that precedes its
provider-local fence is exactly accounted; every required sink has installed
its fence; and every old-atom handoff in the frozen channel frontier is
terminally accounted. No protected-sink acceptance ordered after its fence may
depend on the retired atom. The assertion is relative to one root and one cut.
It neither declares the whole system idle nor revokes work supported by other
roots.

\subsection{A strict counterexample}

Consider a shared remote worker supported by roots $r_A$ and $r_B$. Before
shutdown, work under $r_A$ creates a durable queue message and a provider-side
reservation. The controller then receives a successful cancellation response,
revokes the initiating token, and observes that the local process has exited.
The queue has already accepted the message, so a remote worker can still
convert the reservation into a protected effect. Declaring quiescence from the
three local observations is false.

Blind subtree revocation is not a correct repair. The shared worker may have an
independent, still-current witness rooted only in $r_B$. Deleting the worker
would over-revoke valid work, while retaining it without changing the selected
support could launder $r_A$'s authority through the shared principal. A sound
protocol must therefore solve two problems together: close every effect path
selected under $r_A$, including paths crossing provider boundaries, and
preserve continuing authority only through an explicit atomic rebind to a
support witness that excludes the retired root-$A$ atom.

\subsection{Protocol overview}

We present a root-scoped quiescence protocol that establishes and
certifies this root-scoped cut. The protocol maintains an outstanding
authority--obligation graph. Each live authority instance records an antichain
of minimal sufficient root-support witnesses and one currently selected
witness. Carriers represent the means by which authority can outlive the
initiator---credentials, tasks, messages, triggers, reservations, and prepared
effects. Directed channel tokens represent work transferred between adapters
or providers.

Shutdown begins with a durable epoch transition that prevents new authority
from being issued under the retired epoch. Required provider adapters install
operation-time fences, enumerate or conservatively cover their local frontier,
and drive each covered item to a terminal, effect-closed, pre-fence committed,
rebound, or unknown state. An adapter then emits a signed leaf certificate
binding the root, cut, coverage manifest, configuration, fence and log
watermarks, rebind receipts, local-open set, and outbound and accepted channel
tokens. The global combiner accepts a set of leaves only when the manifest is
complete, the leaves agree on the cut, no unknown or conflicting state remains,
and cross-provider channel tokens balance exactly. Missing evidence yields
\textsc{Indeterminate}; known live work yields \textsc{Not-Quiescent}. Neither
state is converted into success by timeout.

This structure deliberately separates actuation from evidence. Provider-local
effect-closure certificates are valuable leaves, but a collection of locally
empty leaves is not yet a global proof: a message can be in flight between two
locally empty frontiers. Conversely, global cancellation orchestration without
provider effect fences cannot exclude a late commit. The protocol composes
both layers at one root-epoch cut.

\subsection{Contributions}

This paper makes five contributions:

\begin{enumerate}
  \item It defines root-scoped authorization quiescence and gives strict
  executions in which cancellation acknowledgement, token revocation, process
  exit, subtree revocation, or provider-local closure is insufficient. The
  definition distinguishes cessation of one authorization root from global
  system idleness and from business-level completion or rollback.

  \item It introduces a support algebra for multi-root authority. Minimal
  sufficient root sets form an antichain, live instances select one witness,
  and continued use across shutdown requires an atomic rebind receipt to a
  witness excluding the retired atom. This preserves independent authority
  without permitting root laundering.

  \item It gives an active cut protocol that combines a durable root-epoch
  transition, issuance freeze, sink fencing, carrier and obligation closure,
  exact inter-adapter channel accounting, and signed provider leaf
  certificates. Leaf certificates form a cutset proof rather than an
  unordered collection of local cancellation claims.

  \item It establishes conditional safety and composition results: post-cut
  non-issuance, no false quiescence, independent-support preservation, merge-order
  independence, and crash/replay stability. It also identifies an
  indistinguishability boundary: an opaque endpoint that supplies no sound
  query, fence, expiry, or terminal receipt cannot contribute a positive
  quiescence leaf.

  \item It implements an executable late-effect test-suite evaluation and a separately
  implemented trace checker for cancellation races, late effects, crash and
  restart, missing fences, in-flight channels, alternate roots, opaque
  endpoints, and certificate replay. Section~\ref{sec:evaluation} reports only
  executions accepted by the registered verifier.
\end{enumerate}

\subsection{Scope}

The certified safety property covers registered endpoints and protected effect
sinks whose authority lineage, carriers, and cross-provider transfers are
completely mediated or conservatively represented. It is a claim that no
future protected acceptance can depend on the retired root atom and that every
cut-relevant old-atom acceptance preceding its provider-local fence is exactly
accounted.
Irreversible effects committed before those fences remain historical facts;
task goal completion, compensation, global process passivity, physical-world
cessation, and provider behavior outside the registered evidence boundary are
separate properties.

The remainder of the paper separates quiescence from adjacent control results
(Section~\ref{sec:background}), states the system and assurance model
(Section~\ref{sec:system}), formalizes the support and cut semantics
(Section~\ref{sec:formal}), presents the protocol and instantiations
(Sections~\ref{sec:design}--\ref{sec:instantiations}), evaluates the registered
artifact (Section~\ref{sec:evaluation}), and closes with related work and the
exact assurance boundary.

\section{Background and Problem Separation}
\label{sec:background}

The protocol composes established control mechanisms, but its decision
object is distinct: whether one authorization root can still cause a new
protected commit after a particular durable cut. Table~\ref{tab:separation}
states the evidence supplied by adjacent mechanisms and the remaining fact
needed for that decision.

\begin{table}[t]
\caption{Adjacent observations and the root-scoped quiescence decision.}
\label{tab:separation}
\small
\begin{tabularx}{\textwidth}{p{0.18\textwidth} X X}
\toprule
Observation or mechanism & What it establishes & What remains for root-scoped quiescence \\
\midrule
Cancellation acknowledgement & a task endpoint accepted or attempted a cancellation-state transition & whether underlying work, descendants, prepared effects, and transferred messages can still commit \\
Process exit & one observed process terminated & whether authority escaped into remote tasks, credentials, triggers, queues, or provider state \\
Token revocation & a named token is no longer accepted under the issuer's revocation semantics & derived credentials, cached authorization, already accepted work, and effects not mediated by that token \\
Exact delegation-edge revocation & the authority consequences of removing a selected graph edge & obligations and carrier effects already emitted before or across the revocation cut \\
Provider-local effect closure & one adapter's registered local frontier is closed at its fence and watermark & compatible cut identity, complete provider coverage, and messages crossing between leaves \\
Closure-evidence verifier & submitted evidence satisfies a declared closure predicate & execution of the root cut, provider actuation, and distributed collection of compatible leaf evidence \\
Distributed termination detection & a computation is passive and its modeled channels contain no messages & a root-specific authority cut, multi-root preservation, and operation-time provider fences \\
Root-scoped quiescence certificate & a complete compatible cutset accounts for cut-relevant old-atom pre-fence acceptances and excludes future protected acceptance under one retired root atom & global business completion, rollback, or effects outside the declared mediation boundary \\
\bottomrule
\end{tabularx}
\end{table}

\subsection{Cancellation is a request, not a closure proof}

Long-running task protocols correctly permit cancellation to be cooperative.
The current MCP Tasks specification treats cancellation as an eventually
consistent task-state operation, and A2A specifies an attempt to cancel rather
than a guarantee that all underlying execution has ceased
\cite{mcp2026tasks,a2a2026spec}. This design is necessary for remote work that
has already entered a provider lifecycle or crossed an irreversible boundary.
It also means that the acknowledgement is an input to quiescence, not the
quiescence certificate itself.

The distinction is phase-sensitive. A proposed task can often be discarded; an
accepted task may already own durable queue state; an executing task may have
delegated credentials; and a completing task may be racing with a protected
commit. Emerging agent-infrastructure guidance likewise distinguishes
revocation semantics across task phases rather than treating a transport close
as task termination \cite{ietf2026agentarchitecture}. The protocol records
these states as authority carriers and obligations and closes them at the sink
that can still accept an effect.

Empirical work on agent-framework stop controls has demonstrated cancellation
orphans, timeout zombies, replayed execution, and sibling leakage, and places
effect enforcement outside the agent runtime \cite{stopmeansstop2026}. That
result establishes the enforcement gap and the value of an external gate. Our
decision object adds a root-selected, multi-provider cut, exact cross-provider
carrier accounting, and explicit preservation of independently authorized
shared work.

\subsection{Credential revocation and bounded expiry}

OAuth token revocation names a concrete credential and gives clients a
standard way to request invalidation \cite{rfc7009}. Short-lived or
heartbeat-renewed credentials can further bound how long authority survives a
lost controller \cite{heartbeat2026}. Usage control treats ongoing
authorization, obligations, conditions, and mutable attributes as first-class
parts of long-lived access, rather than reducing control to the initial
request decision \cite{park2004uconabc}. Time-bounded leases likewise make
continued rights conditional on an explicit term and fault model
\cite{gray1989leases}. These mechanisms are effective leaf
actuators when every use is mediated by the relevant issuer and when derived
credentials and accepted operations are covered.

Root-scoped quiescence quantifies over the larger consequence graph. A token
may have created another token, authenticated a durable job, or authorized a
prepare that no longer consults the token at commit. Revoking the initiating
credential therefore closes one carrier, not necessarily the entire root
cut. The protocol accepts revocation and expiry receipts as local closure
evidence while retaining every uncovered descendant, accepted operation, or
ambiguous completion as open or unknown.

\subsection{Revocation graphs and multiple roots}

Graph revocation asks which principals and rights cease to be authorized when
an edge or grant is removed. Classical trust-management work models policy
changes as explicit state transitions, classifies the semantics of revocation,
and gives formal or rule-based accounts of distributed delegation
\cite{chander2001statetransition,hagstrom2001revocations,
zhang2003delegationrevocation,li2003delegationlogic}. Exact revocation must
preserve authority supported by an alternate valid path; VERA makes this
edge-precise objective explicit \cite{vera2026}. Residual-authorization
analyses show why current reachability alone can be insufficient when the
correct revocation result depends on authorization history
\cite{residualauth2026}.

The protocol adopts the alternate-path requirement and extends the decision
object along the time and effect dimensions. For each live instance it records
an antichain of minimal sufficient root-support sets and a selected witness.
Retiring atom $a_c$ does not delete an instance whose authority can be justified
without $a_c$; instead, continued use requires an atomic rebind to a witness
that excludes $a_c$. The protocol then accounts for obligations and carriers
already emitted under the old selection. Thus graph reachability determines
who may continue, while cut and frontier evidence determines whether old-root
effects can still arrive.

\subsection{Provider effect closure and evidence closure}

Provider-local closure systems reason directly about durable external effects.
The provider-boundary effect-closure model, for example, characterizes
future-use, instance, and lineage closure and exposes a provider-local effect
frontier \cite{santosgrueiro2026effectclosure}.
Such a certificate is a natural provider-frontier leaf: it can attest that a
registered provider adapter installed the named fence and closed every item in
its local coverage manifest.

Local closure is not compositional without a shared cut. Suppose provider
$P$ reports an empty outbound queue immediately before sending a message, and
provider $Q$ reports an empty inbound queue immediately before accepting it.
Both local snapshots can be individually true while the composed system still
contains a live cross-provider obligation. The protocol therefore binds
every leaf to the same root and epoch, records typed SEND and terminal ACK
channel facts, and requires exact token-projected cross-leaf accounting before issuing a global
certificate.

Bounded Agent Closure (BAC) organizes closure evidence for authority,
execution, commitment, and operational state and verifies submitted evidence
against a declared closure specification \cite{boundedagentclosure2026}. That
verification problem is complementary to the execution protocol here.
The protocol establishes the root cut, actuates provider fences, collects
leaf evidence, and constructs a consistent distributed cut; a closure verifier
can validate the resulting evidence object. Keeping these roles separate makes
the trusted boundary explicit: orchestration cannot replace evidence, and
evidence validation cannot by itself perform shutdown.

\subsection{Distributed termination detection}

Classical termination detection asks when all participating processes are
passive and no computation message remains in transit. Dijkstra--Scholten
accounts for diffusing computation, Mattern treats asynchronous and non-FIFO
communication through message-counting algorithms, and consistent-snapshot
techniques make the channel-state problem explicit
\cite{dijkstra1980termination,mattern1987termination,chandy1985snapshot}.
The protocol uses the same core lesson: local emptiness is insufficient when
messages cross a cut.

Authorization quiescence nevertheless differs from computational termination.
A process may remain active under an independent root after the target root is
quiescent; conversely, all processes may be passive while a durable credential,
cron trigger, or prepared provider effect remains able to commit. The relevant
unit is therefore a root-selected authority and effect path, and the barrier is
enforced at protected sinks rather than inferred solely from process states.

\subsection{Three-valued completion}

The protocol reports one of three outcomes. \textsc{Quiescent} means the
complete registered cutset satisfies the certificate predicate.
\textsc{Not-Quiescent} means sound evidence identifies a live authority,
obligation, carrier, or unmatched channel token. \textsc{Indeterminate} means
the result cannot be proved because coverage, fence installation, state
observation, or evidence consistency is missing. This distinction prevents a
timeout, partition, stale cache, or opaque endpoint from being interpreted as
successful shutdown.

\section{System, Threat, and Assurance Model}
\label{sec:system}

\subsection{Assurance statement}

We study shutdown of one authorization root while a long-running agent may
have delegated work across heterogeneous runtimes and providers.  The property
is deliberately root-relative and has two distinct enforcement stages.  A
linearized root-ledger cut retires root-epoch atom \(a\) and immediately
prevents issuer-side creation or expansion of authority under \(a\).  Each
effect provider then installs its own locally ordered barrier.  An already
authorized carrier can reach a provider before that barrier, including after
the root-ledger cut; such an acceptance is admissible only when an exact typed
receipt places it before the provider barrier and the corresponding future-use
frontier is closed.  A certificate establishes that all required barriers are
installed, all such acceptances are accounted for, and no extension can create
a new protected-effect commitment relying on \(a\).  Work supported by a
different current root may continue, and an effect already accepted before its
local barrier may settle.  The certificate therefore means neither that every
process has exited nor that previously committed or physical consequences have
been undone.  It is a positive, mechanically checkable statement about the
absence of residual commit authority from the named root within the declared
enforcement boundary.

The distinction between a request acknowledgement and this assurance is
fundamental.  A runtime may acknowledge cancellation while a queued message,
delegated credential, scheduled trigger, webhook, reservation, or provider
operation remains able to cross its commitment point.  Root-scoped quiescence
requires an authority cut, complete carrier coverage, provider-local effect
closure, and exact accounting for handoffs between covered endpoints.

\subsection{Components and enforcement boundary}

The system contains five logical components.  They may be co-located, but
their evidence roles remain distinct.

\begin{enumerate}
  \item The \emph{root ledger} stores authenticated grants, monotonically
  increasing root epochs, shutdown cuts, and root phases.  It is the
  linearization authority for retiring a root-epoch atom.
  \item The \emph{shutdown coordinator} freezes the coverage manifest for a
  cut, requests endpoint fences, reconciles cross-endpoint channel tokens, and
  composes leaf evidence.  It cannot declare quiescence by fiat.
  \item A \emph{runtime adapter} enumerates the root-dependent carriers under
  its endpoint, preserves their support provenance, and closes, terminalizes,
  or rebinds each carrier through a registered transition.
  \item A \emph{provider/effect adapter} places a monotone fence at the
  authorization-sensitive commitment boundary and emits a provider-local
  frontier certificate.  The frontier orders accepted commitments relative to
  that provider's local barrier; it does not compare heterogeneous provider
  sequence numbers with the root-ledger sequence.
  \item An \emph{independent verifier} checks signatures, epochs, manifest
  coverage, leaf compatibility, support witnesses, fence and log watermarks,
  and channel-token conservation before returning
  \(\mathsf{QUIESCENT}\), \(\mathsf{NOT\_QUIESCENT}\), or
  \(\mathsf{INDETERMINATE}\).
\end{enumerate}

A \emph{carrier} is any durable or live object that can preserve, transmit,
or exercise authority: a task, child process, delegated credential, session,
queue record, timer, cron entry, webhook, reservation, retry record, or
provider operation.  A \emph{channel token} names one cross-endpoint handoff.
The distinction prevents a local scan from treating an emitted but not yet
accounted-for message as absence of work.  A \emph{protected-effect
commitment} is the registered point after which an effect can proceed without
another authorization decision.  Provider profiles locate this point; later
delivery or settlement is not misclassified as a new authorization event.

The enforcement boundary is declared by a versioned manifest.  It lists every
runtime endpoint, provider adapter, effect sink, channel class, and profile
digest reachable by work selected under the named retired root-epoch atom.  Dynamic carriers
need not all exist when shutdown begins: their creation events and incoming
channel tokens must be covered by a manifest endpoint and by the stable-scan
watermark in its leaf certificate.  An endpoint lacking a registered
enumeration, fence, or frontier mechanism remains visible as indeterminate; it
cannot be silently omitted.

\subsection{Adversary and fault model}

The adversary controls the agent's planning and every cooperating descendant
whose behavior is not part of the trusted enforcement boundary.  It may create
deep delegation chains, select one of several authorization roots, move work
between endpoints, schedule delayed callbacks, retain renewable credentials,
and race cancellation against commitment.  It may delay, duplicate, reorder,
or replay messages; crash a coordinator or adapter at any instruction
boundary; exploit stale cancellation acknowledgements; present incomplete or
conflicting inventories; and attempt to relabel root-dependent work as
independently authorized.  Providers and networks may be unavailable or
partitioned for an unbounded interval.

The adversary cannot forge signatures or collision-resistant digests,
rewrite a committed root-ledger record, violate the atomicity of a registered
fence or rebind transaction, or produce a protected commitment outside every
manifested mediation point.  These are explicit assurance premises rather
than properties inferred from agent cooperation.  Compromise of a trusted
commit gateway, provider fabrication of authenticated frontier evidence, and
an effect path absent from the registered manifest are outside the theorem
domain; each is addressed operationally by profile admission and manifest
audit rather than by interpreting missing evidence as success.

\subsection{Assurance assumptions}

Table~\ref{tab:quiescence-assumptions} gives the complete premises used by the
safety results.  Assumptions A1--A9 are safety premises.  A10--A11 are needed
only for the convergence result; safety remains fail-closed during a partition
or permanent provider failure.

\begin{table}[p]
\caption{Assurance assumptions for a registered quiescence profile.}
\label{tab:quiescence-assumptions}
\footnotesize
\begin{tabularx}{\textwidth}{L{0.055\textwidth}L{0.20\textwidth}Y}
\toprule
ID & Assumption & Enforced obligation \\
\midrule
A1 & Issuer mediation and non-expansion & Every transition that issues, delegates, reactivates, or enlarges authority under a root crosses the manifested root issuer and validates the exact root-epoch atom at its root-ledger linearization point.  This premise does not treat exercise by an already authorized carrier as a new issuance. \\
A2 & Authentic canonical evidence & Grants, root atoms, support records, cuts, manifests, fences, channel dispositions, rebind receipts, frontier statements, and certificates use unambiguous canonical encodings and unforgeable authenticators. \\
A3 & Durable root linearization & A cut retiring epoch \(e\), the advance to \(e+1\), the phase transition, the frozen manifest, and the exact cut-target token frontier commit atomically in a serializable durable ledger; epochs never decrease and a retired atom is never reactivated.  The root transaction does not itself install a provider fence. \\
A4 & Support-provenance completeness & Every carrier and handoff records all policy-relevant minimal authorization supports, the witness actually selected for its current authority, and authenticated parent-to-child provenance.  Omission or ambiguity yields an unknown carrier. \\
A5 & Per-sink fence and frontier soundness & Every manifested effect sink \(p\) serializes commitment admission with a durable local barrier \(b_p(c)\).  The barrier causally follows authenticated observation of cut \(c\), rejects every later local attempt selected under the retired atom, and yields a frontier that completely enumerates every cut-relevant acceptance under that atom that was outstanding when \(c\) committed or was admitted from a dependent carrier before \(b_p(c)\), while closing each such commitment's future-use, instance, and lineage frontier. \\
A6 & Registry and manifest completeness & The frozen manifest covers every runtime, provider, effect sink, and channel class reachable from work selected under the retired atom.  Stable scans include all carrier creation and transition events through their stated watermarks. \\
A7 & Exact channel accounting & Each cross-endpoint handoff has a globally unique, retry-stable token whose sole authoritative creation event is \(\mathsf{SEND}(t)\).  For the retired atom, \(\mathsf{SEND}(t)\prec c\) and the token belongs to the exact frontier frozen by \(c\).  Exactly one authenticated terminal disposition---accepted into a covered carrier, rejected by a fence, or expired under an enforced deadline---has the same canonical token projection; acceptance precedes destination-terminal accounting. \\
A8 & Atomic rebind & One atomic transition preserves the exact operation and effect-envelope digests, selects a complete current support witness excluding the retired atom, installs its fresh root-epoch atoms, and emits a single-use receipt binding those facts. \\
A9 & Crash-safe monotonicity & Cut, fence, carrier, channel, and certificate records are durably ordered; an exact retry requires the same request identifier and canonical candidate digest, conflicting identifier/body reuse or certificate reminting is rejected, and recovery never rolls back an epoch, fence, terminal disposition, or closed frontier. \\
A10 & Eventual evidence stabilization & For the liveness theorem, every required sink eventually installs its barrier; every manifested runtime/provider endpoint eventually returns a stable scan and the required frontier or authenticated terminal evidence; every transient pre-certification unknown or state disagreement is eventually resolved.  An authenticated same-key equivocation is never cleared in place: the generation remains indeterminate unless policy starts a new generation under a fresh evidence key. \\
A11 & Finite drain and fair delivery & For the liveness theorem, the cut has finitely many dependent carriers, effects, and tokens; issuer transitions cannot create or expand authority under the retired atom after the cut; every emitted token is eventually delivered to a covered destination or obtains an authenticated fence-reject or enforced-expiry disposition; and every enabled fence, close, typed terminal disposition, or valid rebind action is eventually taken. \\
\bottomrule
\end{tabularx}
\end{table}

\subsection{Assets, attack goals, and verdict discipline}

The protected asset is the ability to create a new protected-effect commitment
using the named retired root-epoch atom.  The adversary succeeds if it causes any of the
following while a verifier returns \(\mathsf{QUIESCENT}\):

\begin{itemize}
  \item an accepted commitment locally ordered after its sink barrier whose
  selected authorization witness contains the retired root-epoch atom;
  \item a cut-relevant acceptance under the retired atom, ordered before its
  local fence, lacking an exact \(\mathsf{PRE\_FENCE\_COMMITTED}\) receipt and
  closed provider frontier;
  \item a root-dependent carrier omitted from every compatible leaf;
  \item a cross-endpoint handoff omitted, duplicated, or accepted into an
  uncovered destination carrier;
  \item a pre-fence commitment used as cover for an additional post-fence
  effect;
  \item a nominal rebind whose replacement witness still depends on the
  retired atom, uses a stale root, or lacks an independent authorization
  decision; or
  \item a certificate composed across different cuts, manifest roots,
  adapter-profile epochs, or conflicting leaf versions.
\end{itemize}

Verdicts separate a demonstrated blocker from an evidence gap.
\(\mathsf{NOT\_QUIESCENT}\) records a verified open carrier, in-flight
token, post-barrier commitment, or other concrete counterexample.
\(\mathsf{INDETERMINATE}\) records missing, stale, opaque, or conflicting
evidence when no verified counterexample is available.  Only
\(\mathsf{QUIESCENT}\) is a release certificate.  Neither of the other
verdicts is converted to success by timeout, retry exhaustion, majority vote,
or an agent's assertion that it has stopped.

The certified statement covers the shutdown generation initiated by the named
cut and the locally ordered sink evidence composed for that generation.  It
does not promise application-level goal completion, global infrastructure
idleness, cancellation of work supported solely by other roots, rollback of a
pre-fence commitment, erasure of information already observed, or reversal of a
physical-world consequence.  These exclusions make the certificate's subject
precise: within the declared profile, the retired atom has no residual path to
a new protected-effect commitment.

\section{Formal Model}
\label{sec:formal}

This section defines the object named by a root-scoped quiescence certificate,
the evidence required to establish it, and nine results used throughout the
paper.  Proofs appear in Appendix~\ref{app:proofs}.

\subsection{Root-epoch atoms and machine state}

Let \(\mathcal R\) be the set of root identifiers and let \(H\) be a
collision-resistant digest over an injective, length-delimited canonical
encoding.  A \emph{root-epoch atom} is
\[
  a=\langle r,e,h_g\rangle\in\mathcal A,
\]
where \(r\in\mathcal R\), \(e\in\mathbb N\) is the root epoch, and \(h_g\)
is the digest of the authenticated grant.  Write \(\operatorname{rid}(a)=r\)
and \(\operatorname{rid}(W)=\{\operatorname{rid}(a):a\in W\}\).  Root atoms are not bearer strings.
An atom is current for a new issuer decision only when the root ledger contains
the same grant digest and epoch in phase \(\mathsf{ACTIVE}\).  A carrier
authorized before retirement can still hold that exact atom while the
distributed generation drains; its subsequent effect admission is governed by
the relevant provider-local barrier, not by reinterpreting the root cut as an
atomic cross-provider fence.

The security-relevant machine state is
\[
 S=\langle A,C,X,T,F,K,L,M,P\rangle .
\]
Here \(A\) is the root ledger; \(C\) the cut ledger; \(X\) the carrier
registry; \(T\) the channel-token ledger; \(F\) the endpoint fence map;
\(K\) the provider commitment/frontier state; \(L\) an authenticated event
log; \(M\) the registered coverage manifests; and \(P\) the versioned adapter
profiles.  Security-changing transitions are serialized within their
respective durable domains and use durable idempotency keys.

Let \(\preceq_r\) be the root ledger's durable order and, for every endpoint or
effect sink \(p\), let \(\preceq_p\) be its durable local admission/log order.
Write \(\prec_r\) and \(\prec_p\) for their strict parts.
The global causal relation \(\prec\) is the transitive closure of local durable
order, authenticated program order, send-before-accept edges, and observation
edges from a committed record to a transition that authenticates that record.
Thus events in different durable domains are compared only through \(\prec\),
never by comparing their implementation-specific sequence numbers.  We use
\(\nu_d(o)\) solely as the local position of event \(o\) within durable domain
\(d\); in particular, \(\nu_r\) and \(\nu_p\) are positions within
\(\preceq_r\) and \(\preceq_p\).

\begin{definition}[Linearized root cut]
\label{def:root-cut}
For a current atom \(a_c=\langle r,e,h_g\rangle\),
\(\mathsf{BeginShutdown}(r,m)\) commits
\[
 c=\langle cid,a_c,e+1,\nu_r(c),h_m,h_P,h_T\rangle
\]
at root-ledger position \(\nu_r(c)\).  The same transaction retires epoch
\(e\) by marking \(a_c\) retired, advances \(A(r)\) to the new epoch \(e+1\)
in phase
\(\mathsf{DRAINING}\), and freezes manifest digest \(h_m=H(m)\), profile
digest \(h_P\), and exact cut-target token-frontier digest
\(h_T=H(T_c)\).  The frontier \(T_c\) is defined below from authoritative
pre-cut \(\mathsf{SEND}\) records.  Advancing the counter does not issue a successor grant; a
later active atom requires a separately authenticated grant transition.
The transaction neither revokes the state of a previously authorized carrier
nor installs a barrier in any provider domain.
\end{definition}

Let \(\mathcal I_r\) contain the issuer events that issue, delegate,
reactivate, or enlarge authority under root \(r\).  A1 requires every
\(i\in\mathcal I_r\) to validate its complete selected witness at its own
root-ledger linearization point.  By contrast, a protected effect has a
registered provider-local \emph{commitment-admission event} \(k\).  For
\(k\) at sink \(p\), \(\nu_p(k)\) is the last mediated local decision after
which the effect can proceed without another authorization decision.  A
settlement, delivery, or physical consequence following \(k\) is part of the
same commitment, not a fresh commitment.  We write
\(a\in\operatorname{auth}(k)\) when the exact selected authorization witness
recorded for \(k\) contains atom \(a\).

\subsection{Minimal support and selected authorization}

A carrier may have conjunctive requirements and alternative roots.  Treating
its roots as a flat union would over-revoke independently authorized work;
treating them as an existential set would permit root laundering.

\begin{definition}[Support antichain]
\label{def:support-antichain}
For carrier \(x\) in state \(S\), its support antichain is a finite family
\[
 \Sigma_S(x)\subseteq 2^{\mathcal A}
\]
such that every \(W\in\Sigma_S(x)\) is a sufficient conjunctive
authorization witness and no distinct \(W,W'\in\Sigma_S(x)\) satisfies
\(W\subset W'\).  The antichain contains all policy-relevant minimal
witnesses admitted by the registered derivation profile.  The durable field
\(\operatorname{sel}_S(x)\in\Sigma_S(x)\) records the witness actually
selected for the carrier's current authority.
\end{definition}

Let \(\operatorname{Current}_S(x,W)\) hold when every atom in \(W\) is current
and the authenticated derivation edges joining those atoms to \(x\) remain
valid.  A selected witness is an authorization fact, not a planner hint.  New
issuer decisions require
\(\operatorname{Current}_S(x,\operatorname{sel}_S(x))\); effect admission by an
already authorized carrier instead presents the exact selected witness to the
sink's barrier rule.  This separation preserves the race interval between the
root cut and provider fence as explicit evidence rather than assuming it away.

For cut \(c\), let \(D_c\) be the least provenance-closed set containing each
carrier present when \(c\) commits whose selected witness contains \(a_c\) and
each carrier later materialized by delivery or acceptance of an outstanding
token \(t\in T_c\) whose support witness contains \(a_c\).  The closure includes
these pre-cut transfers when they are delivered or accepted before the
corresponding channel or ingress barrier, even if acceptance occurs after the
root-ledger cut.  A carrier leaves \(D_c\) only by a
terminal/effect-closed transition or by the atomic rebind of
Definition~\ref{def:rebind}; relabeling a field does not remove the dependency.

For each manifested sink \(p\), let \(K_c^p\) be the cut-relevant commitment
set: every not-yet-closed commitment authorized by \(a_c\) and outstanding at \(p\) when \(c\)
commits, together with every commitment admitted from a member of \(D_c\)
before \(b_p(c)\).  A historical commitment whose future-use, instance, and
lineage frontier was already authenticated as closed before \(c\) is not
reopened merely to populate \(K_c^p\).  This scope captures cut-time provider
state and cut-to-barrier races without requiring an unbounded replay of all
effects ever admitted under epoch \(e\).

\begin{definition}[Atomic independent rebind]
\label{def:rebind}
Let \(op_x\) and \(env_x\) be the carrier's canonical remaining operation and
effect envelope.  They are immutable across a rebind.
\(\mathsf{Rebind}(x,W',c)\) is admissible only if
\[
 W'\in\Sigma_S(x),\qquad a_c\notin W',\qquad
 \operatorname{Current}_S(x,W').
\]
We call such a \(W'\) an \emph{independent support}: it excludes the retired
atom \(a_c\).  A witness using a different root identifier is the common
alternate-root case, not a requirement of the definition.
It re-runs the registered authorization decision for \(x\), atomically
changes \(\operatorname{sel}(x)\) and the carrier authorization epoch while
preserving the exact \(op_x\) and \(env_x\), and emits a single-use receipt
\[
 \rho_x=\mathsf{Sig}
   (cid,xid,H(\operatorname{sel}_{old}),H(W'),H(op_x),H(env_x),
      h_P,\nu_X(\rho_x)).
\]
No descendant inherits the rebind merely from ancestry; it must either be
covered by the same atomic profile transition or carry its own receipt.
\end{definition}

\subsection{Carriers, manifests, and channel conservation}

A carrier record is
\[
 x=\langle xid,kind,endpoint,\Sigma(x),\operatorname{sel}(x),op_x,env_x,
        parents,state,frontier,epoch\rangle ,
\]
where
\[
 \begin{aligned}
 state\in\{&\mathsf{OPEN},\mathsf{EFFECT\_CLOSED},
              \mathsf{PRE\_FENCE\_COMMITTED},\\
            &\mathsf{REBOUND},\mathsf{TERMINATED},\mathsf{UNKNOWN}\}.
 \end{aligned}
\]
For an accepted commitment \(k\) at sink \(p\), a typed pre-fence receipt is
\[
 \begin{aligned}
\rho^{\mathrm{pfc}}_{p,k}=\mathsf{Sig}_p\langle
   &cid,p,kid,H(\operatorname{auth}(k)),H(op_k),H(effect_k),\\
   &\nu_p(k),\nu_p(b_p(c)),s_\rho,
   \mathsf{ACCEPTED\_BEFORE\_FENCE},h_P\rangle .
 \end{aligned}
\]
It verifies jointly with the cut's authenticated fence record only if
\(a_c\in\operatorname{auth}(k)\), the complete selected witness and the
operation and effect digests match the admitted request, the admission and
barrier occur in the same durable domain, \(k\prec_p b_p(c)\prec
\rho^{\mathrm{pfc}}_{p,k}\), both local positions have authenticated log
inclusion, and \(s_\rho\) is the durable receipt-record position.  Thus the
receipt is finalized only after it can bind the installed barrier; it may
account for either a cut-time outstanding acceptance or a cut-to-barrier
acceptance.  State
\(\mathsf{PRE\_FENCE\_COMMITTED}\) means that every
authorization-sensitive commitment associated with the carrier has exactly
one such receipt, is named in the provider frontier, and cannot be extended
into an additional commitment.  It never reclassifies a post-barrier
acceptance.

A manifest
\[
 m=\langle E,Q,G,\Pi,\eta_m\rangle
\]
names runtime/provider endpoints \(E\), cross-endpoint channel classes \(Q\),
commit gateways \(G\), registered adapter profiles \(\Pi\), and the manifest
epoch \(\eta_m\).  Its completeness judgment
\(\operatorname{Complete}(m,a_c)\) states that every commitment path reachable
from \(a_c\) crosses a named gateway and every carrier or handoff on such a
path is enumerable by a named endpoint/channel profile.  This judgment is an
authenticated profile-admission fact checked when the manifest is frozen.
Consequently the named gateways and their barriers hit every registered
directed commitment path reachable from \(a_c\); we call this manifested
hitting set the \(a_c\)-relative path cutset.

Each handoff has a retry-stable token
\[
 t=\langle tid,src,dst,class,H(payload),H(W_t),seq\rangle .
\]
An origin endpoint emits exactly one \(\mathsf{SEND}(t)\).  A destination or
gateway emits exactly one terminal acknowledgement
\[
 \mathsf{ACK}(t,d),\quad
 d\in\{\mathsf{ACCEPT}(x),\mathsf{FENCE\_REJECT},
          \mathsf{ENFORCED\_EXPIRY}\}.
\]
For \(\mathsf{ACCEPT}(x)\), the destination leaf must cover \(x\) and the
accepted carrier's support projection must include \(W_t\).  Fence rejection
and expiry are valid only when the corresponding enforcement evidence is
bound to the cut.  The \(\mathsf{SEND}(t)\) record is the sole authoritative
token-creation event.  For a handoff with \(a_c\in W_t\), A1 and A7 require
\(\mathsf{SEND}(t)\prec c\).  The cut atomically freezes the exact registered
old-atom token multiset
\[
 T_c=\biguplus_{\substack{t:\ a_c\in W_t\\
                 \mathsf{SEND}(t)\prec c}}\{t\},
\]
where each singleton is a one-element multiset and global token uniqueness
makes every multiplicity one.  Its digest is \(h_T\).  Transport may later deliver an already named token,
but it cannot mint or backfill a new old-atom token after the cut.  For
an accepted handoff, the causal order is
\[
 \mathsf{SEND}(t)\prec\mathsf{ACK}(t,\mathsf{ACCEPT}(x))
 \prec \mathsf{TerminalAccount}(x).
\]
Because \(\mathsf{SEND}(t)\) and
\(\mathsf{ACK}(t,d)\) have different types, define the canonical token
projection
\[
 \pi_t(\mathsf{SEND}(t))=t,
 \qquad \pi_t(\mathsf{ACK}(t,d))=t,
\]
and extend \(\pi_t\) elementwise to multisets.  Equality of projected
multisets, together with uniqueness and disposition validity, detects both
loss and duplication without equating heterogeneous records.

\subsection{Endpoint fences and leaf certificates}

For cut \(c\), endpoint fence \(f_p(c)\) contains a provider-local barrier
event \(b_p(c)\) that binds the retired atom, local profile epoch, and durable
fence/log watermark.  The adapter authenticates the root cut before installing
the barrier, giving the causal edge \(c\prec b_p(c)\).  Every commitment
admission at \(p\) is comparable with \(b_p(c)\) in \(\preceq_p\): an accepted
admission authorized by \(a_c\) must satisfy \(k\prec_p b_p(c)\), while every attempt with
\(b_p(c)\prec_p k\) and \(a_c\in\operatorname{auth}(k)\) is rejected.  A
request that straddles cut propagation is classified solely by this local
admission order.  No comparison is made between \(\nu_r(c)\) and
\(\nu_p(k)\).  A stable scan begins after the fence and covers every carrier
and channel transition through its stated watermark; the provider frontier
binds every commitment in \(K_c^p\) visible at that boundary.

\begin{definition}[Leaf certificate]
\label{def:leaf}
A leaf from endpoint \(p\) is the authenticated tuple
\[
 \ell_p=\mathsf{Sig}_p\langle
 \chi,scope_p,\omega_p,\mu_p,f_p,O_p,B_p,R_p,J_p,U_p,v_p,d_p\rangle .
\]
The compatibility header
\(\chi=\langle cid,a_c,h_m,h_P,h_T\rangle\) binds the cut, atom, manifest,
profile family, and frozen token frontier.  \(scope_p\) is the endpoint's manifest slice; \(\omega_p\)
is its evidence-through boundary; \(\mu_p\) its
stable-scan and log-root evidence; \(f_p\) its fence/frontier evidence;
\(O_p\) the multiset of authenticated \(\mathsf{SEND}\) records; \(B_p\) the
multiset of authenticated terminal \(\mathsf{ACK}\) records; \(R_p\) rebind
receipts; \(J_p\) typed pre-fence commitment
receipts; \(U_p\) unknown/conflict facts; \(v_p\in\{Q,N,U\}\) its local
verdict; and \(d_p\) the digest of the canonical evidence snapshot projected
through \(\omega_p\), whose preimage excludes \(d_p\) and the outer signature.
The leaf-record event causally follows
\(\omega_p\), and every referenced barrier, typed receipt, rebind, and
terminal acknowledgement is recorded no later than \(\omega_p\).
\end{definition}

A leaf is locally \(Q\) exactly when its manifest slice and stable scan are
complete, its fence and provider frontier verify, \(U_p=\varnothing\), and
\(J_p\) contains exactly one valid \(\rho^{\mathrm{pfc}}_{p,k}\) for every
\(k\in K_c^p\) preceding \(b_p(c)\) and no receipt for a post-barrier
acceptance.  In addition, every \(x\in D_c\) in its slice is one of:
\begin{enumerate}
  \item \(\mathsf{EFFECT\_CLOSED}\) or \(\mathsf{TERMINATED}\);
  \item \(\mathsf{PRE\_FENCE\_COMMITTED}\) with exact ordered receipts and a
  matching closed frontier; or
  \item \(\mathsf{REBOUND}\) with a valid receipt to a current witness that
  excludes \(a_c\).
\end{enumerate}
An authenticated \(\mathsf{OPEN}\) carrier or accepted post-barrier
commitment is a local \(N\) witness.  A leaf exports its sends and terminal
acknowledgements but
does not decide channel closure in isolation; that decision requires the
composite multiset.  Missing, stale, opaque, or conflicting evidence without a
concrete blocker produces \(U\).

\subsection{Composition and three-valued decision}

Let a partial evidence object be a finite map from manifest leaf identifiers
to leaf digests and bodies, together with a conflict marker.  Define
\(E_1\sqcup E_2\) by keywise union: an absent entry takes the present value;
identical entries collapse; two different authenticated values for one key
produce the conflict marker.  Objects with different compatibility headers
also conflict.  This definition makes retries idempotent while preserving
equivocation evidence.

For compatible evidence \(E\), write
\[
 O(E)=\biguplus_{\ell_p\in E}O_p,
 \qquad B(E)=\biguplus_{\ell_p\in E}B_p.
\]
\(\operatorname{Covered}(E,m)\) holds when the leaf scopes form exactly the
required manifest coverage (overlap is allowed only when the manifest declares
the same replicated evidence key).  \(\operatorname{ChannelsClosed}(E)\)
holds, relative to the cut in the compatibility header, when the projected
sends are exactly the frozen frontier, every send precedes the cut, and every
send has exactly one valid terminal acknowledgement.  Equivalently, token
identifiers and multiplicities reconcile under
\[
  \pi_t(O(E))=T_c=\pi_t(B(E)),
\]
and every accepted token maps to a covered destination carrier.  The predicate
does not assert equality between \(\mathsf{SEND}\) and \(\mathsf{ACK}\)
records themselves.

\begin{definition}[Composite verdict]
\label{def:verdict}
The verifier computes \(V_c(E)\) as follows.
\begin{enumerate}
  \item Return \(N\) if verified evidence contains a concrete open
  root-dependent carrier, an unacknowledged emitted token under a complete
  scan, an accepted commitment \(k\) with
  \(b_p(c)\prec_p k\) and \(a_c\in\operatorname{auth}(k)\), or an accepted
  cut-relevant pre-fence commitment for which a complete authenticated scan
  establishes that its exact ordered receipt or closed frontier is absent.
  \item Otherwise return \(U\) if the header conflicts, the manifest or any
  leaf scope is absent, a signature/profile/watermark is stale, any leaf is
  unknown, an endpoint is opaque, or channel conservation cannot be proved.
  \item Otherwise return \(Q\) if every leaf is locally \(Q\), all fences and
  frontiers verify, coverage is complete, and channels are closed.
\end{enumerate}
\end{definition}

The priority of \(N\) preserves a known counterexample even when other
evidence is unavailable.  Let \(\boldsymbol\omega_E\) be the canonical map
from each manifested leaf identifier to its local evidence-through boundary
\(\omega_p\), ordered by leaf identifier; it reduces to one sequence in a
single-ledger profile.  Define the pre-certificate candidate body
\[
 \mathcal B_q=\langle cid,r,e,h_m,\boldsymbol\omega_E,H(E)\rangle,
 \qquad h_q=H(\mathcal B_q).
\]
The canonical preimage \(\mathcal B_q\) excludes the request identifier
\(qid\), its own digest \(h_q\), issuance metadata, and the outer signature.
Let \(s_c\) be the verifier-ledger position reserved by the atomic issuance
transaction.  Only \(Q\) authorizes issuance of
\[
 \mathcal C_c=\mathsf{Sig}_{V}
 \langle\chi,qid,h_q,H(E),H(O(E)),H(B(E)),s_c,
 \mathsf{QUIESCENT}\rangle .
\]
Certificate issuance is keyed by the canonical request identifier \(qid\) and
the pre-certificate body digest \(h_q\).  After the issuance record
\((cid,qid,h_q,H(E),s_c,\mathcal C_c)\) commits, an exact retry with the same \(qid\) and
\(h_q\) returns the identical \(\mathcal C_c\).  A
different request identifier for the completed cut, or reuse of \(qid\) with a
different body, is a conflicting remint and is rejected rather than signed.

\begin{definition}[Root-relative authorization quiescence]
\label{def:root-quiescence}
A trace is quiescent for cut \(c\), written \(\operatorname{RQ}(c)\), when,
for every manifested sink \(p\), \(c\prec b_p(c)\), every accepted commitment
\(k\in K_c^p\) with \(a_c\in\operatorname{auth}(k)\) satisfies
\[
  k\prec_p b_p(c)
\]
and has an exact \(\rho^{\mathrm{pfc}}_{p,k}\) in a closed provider frontier,
and no accepted commitment whose authorization contains \(a_c\) is ordered
after \(b_p(c)\).  At certificate
issuance, every carrier \(x\in D_c\) is effect-closed, terminal,
\(\mathsf{PRE\_FENCE\_COMMITTED}\), or independently rebound, and all
\(a_c\)-dependent handoffs in the frozen frontier \(T_c\) are terminally accounted for.  Monotone barriers make
the same no-admission-under-\(a_c\) clause hold in every admissible extension of the
certified trace.  A later certificate cannot erase an intervening
post-barrier violation.
\end{definition}

\subsection{Safety, composition, and progress results}

The following nine results characterize the protocol.  The first eight are
safety or algebraic results and do not assume eventual provider availability.

\begin{lemma}[Cut issuer non-expansion]
\label{lem:cut-nonissuance}
Under A1--A3, after \(\mathsf{BeginShutdown}\) retires \(a_c\) at
\(\nu_r(c)\), no later issuer transition in \(\mathcal I_r\) may issue,
delegate, reactivate, or enlarge authority selected under \(a_c\).  The lemma
makes no claim that the root cut itself blocks effect admission by an already
authorized carrier; that obligation belongs to each \(b_p(c)\) under A5.
\end{lemma}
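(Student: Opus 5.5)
The argument is a direct serialization argument in the root ledger, by contradiction. Suppose some issuer transition \(i\in\mathcal I_r\) issues, delegates, reactivates, or enlarges authority whose selected witness \(W\) contains \(a_c\), and suppose \(i\) is \emph{later} than the cut. We first pin down what ``later'' means without comparing heterogeneous sequence numbers. By A1, every \(i\in\mathcal I_r\) crosses the manifested root issuer and has its own linearization point in \(\preceq_r\). Being later than \(\mathsf{BeginShutdown}\) therefore means \(\nu_r(c)\prec_r \nu_r(i)\) in the single durable domain \(r\). Any causal-order statement \(c\prec i\) also reduces to this, because both events are root-ledger records. So the whole argument can stay inside one serializable order.

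Step one uses A3 and Definition~\ref{def:root-cut}. The transaction at \(\nu_r(c)\) atomically marks \(a_c\) retired and moves \(A(r)\) to epoch \(e+1\) in phase \(\mathsf{DRAINING}\). Epochs never decrease, and a retired atom is never reactivated. Hence at every ledger state after \(\nu_r(c)\), the ledger does not contain \(\langle r,e,h_g\rangle\) in phase \(\mathsf{ACTIVE}\). By the definition of currency, \(a_c\) is therefore not current at any such state. Advancing the counter issues no successor grant, so no fresh active atom silently stands in for \(a_c\).

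Step two uses A1 together with the selected-witness discipline. Issuer events require \(\operatorname{Current}_S(x,\operatorname{sel}_S(x))\), validated exactly at the event's own linearization point \(\nu_r(i)\). Since \(a_c\in W=\operatorname{sel}(x)\) and \(a_c\) is not current there, validation fails and \(i\) cannot commit. Reactivation is exactly the case excluded by A3's no-reactivation clause, and enlargement is just an issuance whose witness must be validated. A2 rules out forging a current-looking atom or a grant record, and collision resistance of \(H\) rules out presenting a different grant with the same digest \(h_g\).

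The main obstacle is the serialization boundary: an issuer transaction that read \(a_c\) as \(\mathsf{ACTIVE}\) before the cut but commits after it. I would handle this by appealing to serializability in A3. Validation of the exact atom happens at the linearization point, not at an earlier read. Any serial order therefore places the check either before \(\nu_r(c)\), so the event is not later, or after it, so the check fails. Finally, I would note explicitly that none of these steps says anything about exercise by an already authorized carrier. A1 excludes such exercise from \(\mathcal I_r\), and Definition~\ref{def:root-cut} states that the cut installs no provider barrier. The disclaimer clause therefore follows immediately, and that obligation is deferred to \(b_p(c)\) under A5.
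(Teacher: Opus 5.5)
Your proposal is correct and follows essentially the same route as the paper's proof: the cut's atomic retirement of \(a_c\) and A3 monotonicity make \(a_c\) non-current at every later root-ledger linearization point, so A1's exact-atom validation (with A2 blocking substitution) rejects any later issuer transition selecting \(a_c\), and the disclaimer is deferred to the sink barriers under A5. Your explicit handling of the read-before-cut/commit-after race via serializability spells out what the paper leaves implicit in ``validates at its own linearization point.'' The paper instead fixes the \emph{first} such transition, a minimality device your direct contradiction does not need.
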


\begin{lemma}[Support-sound projection]
\label{lem:support-projection}
Under A2, A4, A6, and A7, projecting the authenticated provenance graph onto
\(D_c\) includes every carrier and accepted handoff whose currently selected
authority depends on \(a_c\).  Removing nonselected alternative witnesses
does not remove a selected dependency.
\end{lemma}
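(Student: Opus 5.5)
The proof goes by induction over the authenticated provenance graph. The claim is that every carrier or accepted handoff $y$ whose currently selected witness contains $a_c$ lies in $D_c$, or in the token frontier $T_c$ for handoffs. The induction is on the causal depth of $y$'s creation event relative to the cut $c$, and we split on whether $y$ exists when $c$ commits or is materialized afterward.

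\textbf{Pre-existing carriers.} Suppose $y$ is present at $c$ and $a_c\in\operatorname{sel}(y)$. Then $y\in D_c$ directly by the definition of $D_c$. It remains to show $y$ is actually visible to the projection:
\begin{itemize}
  \item A4 guarantees that $\operatorname{sel}(y)$ is recorded exactly, or else $y$ is marked $\mathsf{UNKNOWN}$ and never silently omitted.
  \item A6 guarantees that $y$'s endpoint is in the frozen manifest and that its creation event lies within a stable-scan watermark.
  \item A2 rules out a forged or ambiguous encoding of $\operatorname{sel}(y)$ that would hide $a_c$.
\end{itemize}

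\textbf{Handoffs.} Take an accepted handoff $t$ with $a_c\in W_t$. A7, combined with Lemma~\ref{lem:cut-nonissuance}, gives $\mathsf{SEND}(t)\prec c$, so $t\in T_c$. A7 then gives exactly one terminal disposition. If that disposition is $\mathsf{ACCEPT}(x)$, the accepted carrier's support projection includes $W_t$, so $x$ is covered by a manifest leaf and enters $D_c$ under the "materialized by an outstanding token" clause.

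\textbf{Derived carriers (inductive step).} Suppose $y$ is created after $c$ from a parent, with selected dependency on $a_c$. By A4, the parent-to-child provenance is authenticated. Authority selected under $a_c$ cannot be freshly issued after the cut (Lemma~\ref{lem:cut-nonissuance}, which uses A1--A3). The child's selected dependency on $a_c$ must therefore be inherited along a provenance edge from some parent whose selection also contains $a_c$. That parent is in $D_c$ by the induction hypothesis, so $y\in D_c$ by provenance closure of $D_c$.

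\textbf{Exit from $D_c$.} A carrier leaves $D_c$ only via a terminal or effect-closed transition, or via the rebind of Definition~\ref{def:rebind}. In both cases its selected witness no longer depends on $a_c$, or it has no remaining authority. Relabeling a field without one of these transitions does not remove the dependency, and A4 forbids such a relabeling from going unrecorded.

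\textbf{Second sentence of the lemma.} This follows because membership in $D_c$ is defined solely through $\operatorname{sel}$, and propagation follows selected-witness provenance. Let $\Sigma'(x)\subseteq\Sigma(x)$ be obtained by deleting nonselected witnesses. Then $\operatorname{sel}(x)\in\Sigma'(x)$ is unchanged, so the predicate $a_c\in\operatorname{sel}(x)$ is invariant. The antichain property of Definition~\ref{def:support-antichain} ensures that deletion cannot turn the selected witness into a non-minimal or absent entry.

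\textbf{Main obstacle.} The hard step is the inductive case for post-cut derived carriers. We must argue that a new $a_c$-selected dependency can arise only through a recorded provenance edge or a token in $T_c$. This requires combining A1's mediation premise with A4's completeness, so that every "launder-through" path is excluded. I would formalize this as a lemma: every event that creates or changes $\operatorname{sel}$ is either an issuer event in $\mathcal I_r$, an $\mathsf{ACCEPT}$ of a token, a parent-derivation edge, or a rebind. I would then case on these four.
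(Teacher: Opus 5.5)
Your argument is correct in substance and shares the paper's skeleton:
\begin{itemize}
\item cut-time carriers enter the base of \(D_c\) because A4 records \(\operatorname{sel}\) exactly, with ambiguity becoming \(\mathsf{UNKNOWN}\) rather than an omission;
\item old-atom handoffs go through A7 and the token-materialization clause;
\item dependency propagates by provenance closure;
\item the second sentence holds because pruning nonselected witnesses leaves \(\operatorname{sel}(x)\) untouched.
\end{itemize}

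The real difference is the direction of the inductive step. The paper argues forward. Given that every dependent carrier through event \(j\) is projected, whatever such a carrier emits and has accepted also enters \(D_c\). This is because A7 binds the sender's selected support into the token and the destination's support projection. A6 then covers carriers materialized after the cut, up to the stable-scan watermark.

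You argue backward: a post-cut carrier selecting \(a_c\) must inherit it from a parent that also selects \(a_c\). This buys an explicit account of where a post-cut \(a_c\)-selection could come from, which the paper's forward step does not address. The cost is that you must exclude fresh issuance through Lemma~\ref{lem:cut-nonissuance}, and hence through A1 and A3. Neither is among the lemma's stated hypotheses. As written, you prove the statement under A1--A4, A6, and A7, and you should say so explicitly.

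Two smaller points. First, in the handoff case the appeal to Lemma~\ref{lem:cut-nonissuance} is unnecessary, because A7 itself asserts \(\mathsf{SEND}(t)\prec c\) for old-atom tokens. Second, for locally derived post-cut carriers you show membership in \(D_c\) but not visibility in the projection. That needs the same A6 stable-scan argument you gave only for pre-existing carriers.
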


\begin{lemma}[No root laundering by rebind]
\label{lem:no-laundering}
Under A2, A4, and A8, a carrier leaves \(D_c\) through
\(\mathsf{Rebind}(x,W',c)\) only if its continuing authority has a complete,
current derivation whose selected witness excludes \(a_c\).  Changing labels,
parents, or an unselected alternative cannot establish this condition.
\end{lemma}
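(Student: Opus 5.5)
The argument works directly from the only transition that Definition~\ref{def:root-cut} and the definition of \(D_c\) allow to remove a carrier from \(D_c\) without terminalization: the atomic \(\mathsf{Rebind}(x,W',c)\) of Definition~\ref{def:rebind}. Relabeling a field is excluded by definition, and terminal or effect-closed exits are not rebinds. So it suffices to examine an authenticated exit by rebind and show it forces a complete, current, \(a_c\)-free selected derivation.

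\emph{Step 1: recover authenticated content.} Such an exit is witnessed only by a receipt \(\rho_x\). By A2 its canonical, injective encoding and unforgeable signature bind the following fields:
\[
(cid,xid,H(\operatorname{sel}_{old}),H(W'),H(op_x),H(env_x),h_P,\nu_X(\rho_x)).
\]
Collision resistance of \(H\) then pins a unique \(W'\), unique operation and effect digests, and a unique cut. A receipt for a different cut, carrier, or witness therefore cannot be reused; it cannot verify against \(c\).

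\emph{Step 2: the admissibility side conditions.} By A8 the receipt is emitted only by the single atomic transition that checked the Definition~\ref{def:rebind} preconditions:
\[
W'\in\Sigma_S(x),\quad a_c\notin W',\quad \operatorname{Current}_S(x,W').
\]
That same transition re-ran the authorization decision, installed fresh root-epoch atoms, and set \(\operatorname{sel}(x)=W'\). Because the transition is atomic, no intermediate state exists in which \(\operatorname{sel}(x)\) has changed while the witness is unchecked.

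\emph{Step 3: completeness.} A4 is the fact that makes the checked \(W'\) a genuine derivation rather than a fragment. It gives two things: \(\Sigma_S(x)\) contains all policy-relevant minimal supports, and \(\operatorname{Current}_S(x,W')\) validates the authenticated derivation edges, not just the atoms. Any ambiguity or omission instead yields an \(\mathsf{UNKNOWN}\) carrier, which never leaves \(D_c\). Since \(W'\) is a sufficient conjunctive witness by Definition~\ref{def:support-antichain}, \(x\) has a complete, current derivation whose selected witness excludes \(a_c\).

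\emph{Step 4: the negative clause.} I argue by contraposition on each listed manipulation.
\begin{itemize}
\item Changing labels or parents alters \(\operatorname{sel}(x)\) only through the Step~2 transition. By A4 provenance is authenticated parent-to-child, so editing parents without re-derivation leaves the dependency on \(a_c\) in place, or else makes the carrier unknown.
\item Pointing to an unselected alternative \(W''\) does not change \(\operatorname{sel}(x)\), which still contains \(a_c\). Moreover, \(D_c\) membership is defined by the selected witness, matching the final clause of Lemma~\ref{lem:support-projection}.
\end{itemize}
Descendants of \(x\) are handled by the no-inheritance clause of Definition~\ref{def:rebind}. Each descendant needs its own receipt or must be covered by the same atomic transition, so the argument applies to it individually.

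The main obstacle is Step~3: justifying that the checks inside the transition amount to a \emph{complete} derivation. The notions "current" and "complete" rest on A4 rather than on A8 alone, and I would make explicit which fields \(\operatorname{Current}_S\) quantifies over. A secondary point is stating precisely that \(op_x\) and \(env_x\) are unchanged by the rebind. This rules out authorizing a different operation under the new witness, which would be laundering by substitution.
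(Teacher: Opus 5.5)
Your proposal is correct and takes essentially the paper's route. It reduces exit from \(D_c\) to the guard of \(\mathsf{Rebind}\) (\(W'\in\Sigma_S(x)\), \(a_c\notin W'\), currency of atoms and derivation edges) together with A8's atomic fresh authorization decision and unchanged operation/envelope digests, and then dispatches label edits, parent edits, unselected alternatives, and uncovered descendants just as the paper does; your explicit A2 binding of the receipt fields and your attribution of completeness to A4 only elaborate the paper's remarks that a label edit changes no authenticated field of \(\rho_x\) and that a parent edit breaks its provenance digest.
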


\begin{theorem}[Compositional no-false-quiescence]
\label{thm:composition-soundness}
Under A1--A9, if the verifier issues \(\mathcal C_c\), then
\(\operatorname{RQ}(c)\) holds.  In particular, no compatible execution can
contain an unreported cut-relevant pre-fence acceptance under \(a_c\) or an accepted
post-barrier commitment authorized by \(a_c\).
\end{theorem}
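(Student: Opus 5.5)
The proof unfolds the issuance rule. Since \(\mathcal C_c\) is signed only when \(V_c(E)=Q\) (Definition~\ref{def:verdict}), and A2 rules out forgery, we may assume a compatible evidence object \(E\) with the following properties: it contains no \(N\) witness, it triggers no \(U\) condition, every leaf is locally \(Q\), coverage is complete, and \(\operatorname{ChannelsClosed}(E)\) holds. A9 guarantees that a replayed or reminted certificate cannot attach to a different body. It therefore suffices to show that these evidence facts, together with A1--A9, imply each clause of \(\operatorname{RQ}(c)\) (Definition~\ref{def:root-quiescence}). We argue by contradiction: suppose some compatible execution violates \(\operatorname{RQ}(c)\) while \(E\) verifies. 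We split the violation into four cases: a sink-side violation, an unaccounted pre-fence acceptance, an open carrier in \(D_c\), or an unaccounted token in \(T_c\).

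\emph{Sink clauses.} For each manifested sink \(p\), \(\operatorname{Covered}(E,m)\) and A6 give a leaf \(\ell_p\) whose fence evidence verifies. By A5 and the adapter's authentication of the cut, this yields \(c\prec b_p(c)\). Every admission under \(a_c\) is \(\preceq_p\)-comparable with \(b_p(c)\), and A5 rejects every attempt after the barrier, so no accepted \(k\) with \(a_c\in\operatorname{auth}(k)\) satisfies \(b_p(c)\prec_p k\). The barrier is monotone (A9 forbids rolling back a fence), so this clause persists in every extension. For the pre-fence clause, A5 says the frontier completely enumerates \(K_c^p\). Local \(Q\) requires exactly one valid \(\rho^{\mathrm{pfc}}_{p,k}\) in \(J_p\) for each such \(k\), and receipt validity forces \(k\prec_p b_p(c)\) together with a closed frontier. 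A commitment outside every manifested sink would be excluded by the adversary premise combined with \(\operatorname{Complete}(m,a_c)\), because every commitment path reachable from \(a_c\) hits a named gateway in the path cutset.

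\emph{Carrier and channel clauses.} We first show that every \(x\in D_c\) appears in some leaf slice. Lemma~\ref{lem:support-projection} places every carrier whose selected authority depends on \(a_c\) in the authenticated projection. A4 and A6 ensure each such carrier is enumerated by a manifested endpoint within its stable-scan watermark. Lemma~\ref{lem:cut-nonissuance} shows that no post-cut issuer transition adds new \(a_c\)-selected carriers except through tokens in \(T_c\). A local \(Q\) verdict then forces each covered \(x\) to be effect-closed, terminated, \(\mathsf{PRE\_FENCE\_COMMITTED}\), or \(\mathsf{REBOUND}\), and in the rebound case Lemma~\ref{lem:no-laundering} makes the replacement witness genuinely exclude \(a_c\). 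For channels, \(\pi_t(O(E))=T_c=\pi_t(B(E))\) combined with token uniqueness (A7) gives exactly one terminal disposition per frozen token. Every \(\mathsf{ACCEPT}(x)\) must land in a covered carrier whose support includes \(W_t\), so that carrier lies in \(D_c\) and is handled by the carrier argument above.

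\emph{Main obstacle.} The hardest step is the interplay between the carrier and channel clauses: showing that \(D_c\) is \emph{exactly} reflected in \(E\) when carriers are materialized after the cut by delivery of pre-cut tokens, possibly after some leaf's \(\omega_p\). We plan to fix the certificate-issuance point and argue causally. Each leaf's evidence is recorded no later than \(\omega_p\). Each token in \(T_c\) has, by channel closure, a terminal ACK recorded within some leaf. The ordering \(\mathsf{SEND}\prec\mathsf{ACK}\prec\mathsf{TerminalAccount}\) then forces any carrier materialized from it to be terminally accounted before the destination leaf's boundary. Since \(T_c\) is frozen at the cut (A3, A7), no later token can create a new member of \(D_c\). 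Hence every member of \(D_c\) existing at issuance is accounted for in some leaf, which completes the contradiction.
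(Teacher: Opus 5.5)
Your proposal is correct and follows essentially the same route as the paper's proof. You unfold the \(Q\) verdict, obtain \(c\prec b_p(c)\) and post-barrier rejection from A5 (with barrier monotonicity carrying it to every extension), discharge cut-relevant pre-fence acceptances through the exact receipts in \(J_p\), and cover \(D_c\) and \(T_c\) via Lemmas~\ref{lem:cut-nonissuance}--\ref{lem:no-laundering} together with exact token-projection equality; the contradiction framing is cosmetic, and your treatment of carriers materialized from \(T_c\) after the cut spells out what the paper compresses into the evidence-through boundary and the ACCEPT-before-terminal-accounting order (strictly, it is the destination leaf's local-\(Q\) check on the covered carrier, not that ordering alone, that forces accounting by \(\omega_q\)).
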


\begin{theorem}[Noncompositionality and opacity boundary]
\label{thm:boundary}
The model entails two boundary consequences:
\begin{enumerate}
  \item provider-local \(Q\) leaves do not imply composite \(Q\) without
  exact channel conservation; and
  \item if a reachable endpoint exposes neither complete enumeration nor an
  enforceable fence/frontier, no observer of the remaining evidence can be
  both sound and complete for root-relative quiescence.
\end{enumerate}
The sound verdict for the second case is \(U\).
\end{theorem}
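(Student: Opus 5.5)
Both parts will be shown by explicit counterexample constructions inside the model of Section~\ref{sec:formal}, not by a general argument. For each part I will build two admissible executions that present the relevant observer with identical evidence but disagree on \(\operatorname{RQ}(c)\) (Definition~\ref{def:root-quiescence}). Any decision procedure reading only that evidence must then answer the same on both, so it is wrong on one of them.

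\textbf{Part 1.} I will use two endpoints \(P\) and \(Q\) and one old-atom handoff \(t\) with \(a_c\in W_t\) and \(\mathsf{SEND}(t)\prec c\), so that \(t\in T_c\).

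In the first execution, \(P\)'s leaf is taken after its local carrier has terminated. Its slice then contains no open member of \(D_c\), so \(P\)'s leaf satisfies every local \(Q\) condition, and it exports \(O_P=\{\mathsf{SEND}(t)\}\). \(Q\)'s stable scan, fence and frontier are taken before \(t\) is delivered. Its slice is empty, \(U_Q=\varnothing\), and \(B_Q=\varnothing\), so \(Q\)'s leaf is also locally \(Q\).

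Next I will extend the trace so that \(t\) is delivered before the channel or ingress barrier. This materializes a carrier \(y\in D_c\) that stays \(\mathsf{OPEN}\), so the clause of Definition~\ref{def:root-quiescence} requiring every \(x\in D_c\) to be closed or rebound fails.

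In the second execution, \(t\) is fence-rejected. It carries the same two leaf verdicts, and there \(\operatorname{RQ}(c)\) holds.

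So the conjunction of local \(Q\) verdicts does not determine \(\operatorname{RQ}(c)\). Only the composite check \(\pi_t(O(E))=T_c=\pi_t(B(E))\) separates the two executions, since \(\{t\}\neq\varnothing\) in the first. The one thing to verify is that the first execution is admissible under A1--A9. It is: A7 only requires that a terminal disposition eventually exists, and it does not force that disposition to appear in a leaf whose boundary \(\omega_Q\) precedes it.

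\textbf{Part 2.} This is an indistinguishability argument. Fix a reachable endpoint \(o\) that exposes neither complete enumeration nor an enforceable fence or frontier. By A6 it lies on an \(a_c\)-relative commitment path.

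I will construct two traces \(\tau_0\) and \(\tau_1\) that agree on every event outside \(o\)'s private state, including all leaves, sends and acknowledgements from other endpoints.
\begin{itemize}
  \item In \(\tau_0\), \(o\) holds no member of \(D_c\).
  \item In \(\tau_1\), \(o\) internally holds a carrier accepted before \(c\) with \(a_c\in\operatorname{sel}\). Because no barrier exists at \(o\), this carrier can admit a commitment \(k\) with \(a_c\in\operatorname{auth}(k)\) after any purported barrier, so \(\operatorname{RQ}(c)\) fails.
\end{itemize}
For the two traces to be indistinguishable, the creating event of this carrier must be hidden. I will route it through a channel internal to \(o\), or through a pre-cut acceptance that \(o\) never reports; opacity permits both.

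Every observer is then a function of the common evidence. If it answers \(Q\), it is unsound on \(\tau_1\). If it answers \(N\), it is unsound on \(\tau_0\), because the verdict \(N\) asserts a concrete blocker, and \(\tau_0\) satisfies \(\operatorname{RQ}(c)\) apart from \(o\)'s unverifiable status. Hence no observer is both sound and complete. Answering \(U\) is the only sound choice, and it is what Definition~\ref{def:verdict}, step 2, produces for an opaque endpoint.

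\textbf{Main obstacle.} The delicate step is making \(\tau_0\) a genuine witness where \(\operatorname{RQ}(c)\) holds. As stated, \(\operatorname{RQ}(c)\) quantifies over every manifested sink and requires \(c\prec b_p(c)\), which an opaque \(o\) cannot supply. I will handle this in one of two ways:
\begin{itemize}
  \item read completeness as ``correctly decides whether any \(a_c\)-path to a future protected commitment remains'', under which \(\tau_0\) is a yes-instance because \(o\) is not on any live path; or
  \item let the gateway of \(o\) lie outside \(o\) in \(\tau_0\), so that the fence is present while it is absent in \(\tau_1\).
\end{itemize}
I expect to adopt the first reading and to state it explicitly alongside the proof.
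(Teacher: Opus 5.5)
Your proposal is correct and follows the paper's proof. Part 1 is the same counterexample: a closed sender exporting $\mathsf{SEND}(t)$, a destination scanned before delivery, and a trace where $t$ is rejected instead, which only token-projection equality separates. Part 2 is the same two-world indistinguishability argument with a hidden $a_c$-bearing callback at the opaque endpoint. The paper silently adopts your first (semantic) reading of completeness for $\tau_0$ and also covers randomized observers through equal output distributions, while your added $N$-case is a small refinement showing that $U$ is the unique sound verdict.
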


\begin{theorem}[Independent-support preservation]
\label{thm:independent-support-preservation}
Suppose \(W'\in\Sigma_S(x)\), \(a_c\notin W'\), and
\(\operatorname{Current}_S(x,W')\).  A successful atomic rebind to \(W'\)
preserves the carrier's authorization under \(W'\) while retiring \(a_c\),
and the resulting quiescence certificate for \(a_c\) does not revoke any atom
in \(W'\).
\end{theorem}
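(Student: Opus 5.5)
The argument unfolds \(\mathsf{Rebind}\) (Definition~\ref{def:rebind}) under A8 and then shows that no step of the cut protocol touches any atom of \(W'\). There are three claims to establish: the carrier remains authorized after the rebind; that authorization is under \(W'\); and the certificate for \(a_c\) does not revoke anything in \(W'\).

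First I fix the post-rebind state \(S'\). The hypotheses \(W'\in\Sigma_S(x)\), \(a_c\notin W'\), and \(\operatorname{Current}_S(x,W')\) are exactly the admissibility conditions of Definition~\ref{def:rebind}. A8 then makes the transition atomic. It re-runs the registered authorization decision, sets \(\operatorname{sel}_{S'}(x)=W'\), preserves \(op_x\) and \(env_x\), installs the fresh root-epoch atoms of \(W'\), and emits the single-use receipt \(\rho_x\). Atomicity means no intermediate state is observable in which \(x\) is unauthorized or still selected under \(a_c\). Because \(W'\in\Sigma_S(x)\) is a sufficient conjunctive witness and all of its atoms are current, \(x\) is authorized under \(W'\) in \(S'\). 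Since \(a_c\notin W'=\operatorname{sel}_{S'}(x)\), \(x\) leaves \(D_c\) through the only admissible route. Lemma~\ref{lem:no-laundering} confirms that this exit is genuine rather than a relabeling, and the leaf records \(x\) as \(\mathsf{REBOUND}\) under clause~3 of local \(Q\).

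Second, I show that the cut and the certificate do not act on \(W'\). By Definition~\ref{def:root-cut}, \(\mathsf{BeginShutdown}\) retires only \(a_c=\langle r,e,h_g\rangle\) and advances only \(A(r)\). Lemma~\ref{lem:cut-nonissuance} restricts issuer transitions only for authority selected under \(a_c\). A5 has each barrier \(b_p(c)\) reject an attempt \(k\) only when \(a_c\in\operatorname{auth}(k)\). For any later admission by \(x\), we have \(\operatorname{auth}(k)=W'\not\ni a_c\), so the barrier does not fire.

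One case needs care: \(W'\) may contain an atom with the same root identifier \(r\). In that case the atom carries a different epoch or grant digest, because \(a_c\notin W'\) and A3 forbids reactivating a retired atom. So atoms are compared exactly, and the comparison is never taken through \(\operatorname{rid}\).

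Finally, the certificate \(\mathcal C_c\) is an evidence object. Its issuance performs no ledger transition on \(A\). This follows from the component roles: the coordinator "cannot declare quiescence by fiat," and verification only reads. A9's monotonicity covers only records tied to \(c\). Hence every atom of \(W'\) that is current before certification is still current afterwards.

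I expect the main obstacle to be making precise what "does not revoke any atom in \(W'\)" means. The model has no explicit revocation operator, so I would define it as: no transition attributable to cut \(c\) changes the phase or epoch of an atom in \(W'\), or rejects an admission whose witness is \(W'\). I would then check this against every transition type the protocol performs for \(c\): the cut itself, fences, closes, terminal acknowledgements, and certificate issuance. The same-root, fresh-epoch case is the subtle one. I would handle it by appealing to the exact-atom membership tests in A5 and Definition~\ref{def:root-quiescence}.
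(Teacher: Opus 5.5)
Your proposal is correct and follows essentially the same route as the paper: unfolding the atomic rebind gives authorization under the current witness \(W'\) and lets the receipt \(\rho_x\) take \(x\) out of \(D_c\) without termination, while the cut, the sink barriers, and the certificate act only on the exact atom \(a_c\) and therefore invalidate no atom of \(W'\); your explicit barrier check and same-root discussion spell out what the paper leaves implicit. One small correction: certificate issuance is itself a durable transition that moves the retired generation from draining to quiescent, so you should not claim it performs no ledger transition, but rather that it concerns only the retired generation and hence no atom of \(W'\).
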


\begin{theorem}[Merge-order independence]
\label{thm:merge-algebra}
On partial evidence objects sharing a compatibility domain, \(\sqcup\) is
associative, commutative, and idempotent.  Consequently the composite digest
and verdict are independent of leaf arrival order and duplicate delivery;
authenticated disagreement is retained as conflict and therefore cannot yield
\(Q\).
\end{theorem}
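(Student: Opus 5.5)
The plan is to reduce the theorem to an algebraic fact about finite partial maps with an absorbing conflict element, and then transport that fact through the digest and verdict computations. Model a partial evidence object over a fixed compatibility domain (fixed header $\chi$) as either the distinguished value $\bot_{\mathrm{conf}}$ or a finite map $E$ from manifest leaf identifiers to authenticated values (leaf digest together with body). Write $\sqcup$ pointwise on keys via a per-key join $\vee$ on $V\cup\{\mathsf{absent},\bot_{\mathrm{conf}}\}$:
\begin{itemize}
\item $\mathsf{absent}$ is the identity;
\item $v\vee v=v$;
\item $v\vee v'=\bot_{\mathrm{conf}}$ for distinct $v\neq v'$;
\item $\bot_{\mathrm{conf}}$ absorbs everything.
\end{itemize}
A global conflict marker arises if any key is $\bot_{\mathrm{conf}}$ or the headers differ. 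Since we restrict to one compatibility domain, header conflict cannot arise from the inputs. The only global conflict is therefore the marker itself, which I will treat as an absorbing top element. Authenticated equal values collapse because canonical encodings are injective and digests are collision-resistant (A2), so ``identical entries'' is a well-defined equality on values.

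First I will verify that the per-key operation is a semilattice, i.e.\ associative, commutative and idempotent. This is a flat-lattice argument. Order the carrier as $\mathsf{absent}$ below every value $v$, and every $v$ below $\bot_{\mathrm{conf}}$, with distinct values incomparable. Then $\vee$ is exactly the least upper bound in this flat lattice of height two, and any binary lub operation is automatically associative, commutative and idempotent. I will nonetheless spot-check associativity by cases on how many of $x,y,z$ are present and whether the present ones are pairwise equal. The result is $\mathsf{absent}$, the common value, or conflict, in each case independently of bracketing.

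Pointwise lifting to maps preserves all three laws. Adjoining the absorbing global marker also preserves them, because ``some key is conflict'' is monotone under $\sqcup$: once a key is in conflict, any further join leaves it in conflict. This gives the first sentence of the theorem.

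For the consequences, I will note that any finite multiset of arrivals, in any order and with any duplicates, joins to the same object. Associativity and commutativity make the fold order-independent, and idempotence absorbs duplicates. The composite digest $H(E)$ is defined on the canonical encoding of the resulting map, so it too is order-independent. The same holds for $O(E)$ and $B(E)$: they are multiset unions over the leaves of $E$ (one entry per key, not per arrival), so duplicate delivery does not inflate token multiplicities. This point needs care, and I expect it to be the main subtle step, since a naive arrival-indexed union would break $\pi_t(O(E))=T_c$.

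The verdict $V_c(E)$ of Definition~\ref{def:verdict} is a function of $E$ alone, so it inherits the same invariance. Finally, a conflict marker triggers clause 2, ``header conflicts'' or conflicting leaf evidence, unless clause 1 already returned $N$. In either case the verdict is not $Q$, and since $\bot_{\mathrm{conf}}$ is absorbing, no later merge can remove it. This establishes that authenticated disagreement is retained and cannot yield $Q$.
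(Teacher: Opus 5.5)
Your proposal is correct and follows essentially the same route as the paper: a per-key flat domain with absence as bottom and authenticated conflict as absorbing top, lifted pointwise. The digest is invariant because canonical serialization sorts the keys, and the verdict is invariant because it is a pure function of the merged map. Your extra observations are refinements the paper's proof leaves implicit: $O(E)$ and $B(E)$ range over map entries rather than arrivals, and a conflicting object may return $N$ by clause priority rather than $U$, but never $Q$.
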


\begin{theorem}[Crash/replay stability and unrelated-root locality]
\label{thm:crash-replay}
Under A1--A9, crashes and arbitrary replay of committed protocol
messages cannot decrease the root epoch, remove a fence, reopen a closed
frontier, duplicate a terminal token disposition, reuse a rebind receipt, or
turn a non-\(Q\) evidence object into \(Q\) without new valid evidence.
Moreover, let \(r'\in\mathcal R\), \(r'\ne r\), be absent from the frozen
manifest dependencies and from \(\operatorname{rid}(W)\) for every selected
witness \(W\) represented in \(E\).  Advancing the epoch of, or shutting down,
root \(r'\) does not invalidate \(\mathcal C_c\).
\end{theorem}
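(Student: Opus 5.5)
The plan is to prove Theorem~\ref{thm:crash-replay} as an invariant over durable records, in two parts: a monotonicity and idempotence argument for crash/replay, and a locality argument for unrelated roots. Throughout I treat a crash as truncation of volatile state back to the last committed durable prefix. I treat replay as re-delivery of an authenticated record that already exists. Under A2, a replayed message either carries the canonical digest of a committed record or is rejected. Under A9, the only effect of an exact retry is to return the stored result, and a conflicting identifier/body pair is rejected.

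\textbf{Crash/replay part.} I will argue clause by clause, each time naming the assumption whose durable-order premise rules out the regression.

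\begin{enumerate}
  \item \emph{Epoch.} A3 makes epochs nondecreasing and forbids reactivating a retired atom. Recovery replays the ledger prefix rather than rewriting it, and by A9 it never rolls back an epoch.
  \item \emph{Fence.} Fences are durable monotone barriers (A5). A replayed fence request carries the same key, so it is idempotent. No transition removes $b_p(c)$, and A9 forbids rollback.
  \item \emph{Closed frontier.} A closed frontier is never reopened, by A9 together with the non-reopening clause in the definition of $K_c^p$.
  \item \emph{Terminal disposition.} A7 requires exactly one terminal disposition per globally unique token, so a replayed $\mathsf{ACK}$ has an identical projection. Under the merge semantics of Theorem~\ref{thm:merge-algebra}, it either collapses by idempotence or, if its body differs, produces the conflict marker. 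Either way the multiplicity in $\pi_t(B(E))$ cannot grow.
  \item \emph{Rebind receipt.} Rebind receipts are single-use by A8. A second use would need a fresh atomic transition binding the same $\nu_X(\rho_x)$, which durable ordering rejects.
\end{enumerate}

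For the verdict clause, the key observation is that $V_c$ is a function of the evidence object $E$, taken up to $\sqcup$-equivalence. By Theorem~\ref{thm:merge-algebra}, replaying records already contained in $E$ leaves $E\sqcup E'=E$, so the verdict is unchanged. Conflicting replays only add the conflict marker, which forces $U$ or keeps $N$. Hence any move from non-$Q$ to $Q$ requires a leaf or record not already in $E$. That is exactly ``new valid evidence''. Certificate reminting under a fresh $qid$ is blocked by the issuance keying.

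\textbf{Locality part.} I will show that every predicate inspected by $V_c$ refers only to $r$-scoped objects or to atoms whose root identifiers lie in the frozen dependencies. These objects are the header $\chi$, the manifest $h_m$, $T_c$, $D_c$, $K_c^p$, and the receipts, and each of them is fixed by digests at cut time (A3). Shutting down $r'$ commits a cut on $r'$ at some other ledger position. That cut changes $A(r')$ and can fence sinks for atoms with root $r'$. Since $r'\notin\operatorname{rid}(W)$ for every selected witness in $E$, no receipt in $E$ refers to an atom of $r'$. In particular, $\operatorname{Current}$ for each rebind target $W'$ was checked at rebind time and is recorded in the receipt, so its validity is not re-evaluated against the later state of $r'$. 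No fence in $E$ names $a_{c'}$, no token in $T_c$ has $a_{c'}$ in its witness, and the signed $\mathcal C_c$ remains a valid signature over unchanged digests.

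\textbf{Main obstacle.} I expect the main obstacle to be one subtle point: whether some certificate clause quantifies over the current state of the system rather than over recorded evidence. For example, $\operatorname{RQ}(c)$ requires rebound carriers to be ``independently rebound''. I would resolve this by reading Definition~\ref{def:root-quiescence} as evaluated at issuance, citing its phrase ``at certificate issuance''. The only persistent clause is then the no-post-barrier-admission condition under $a_c$, and a shutdown of $r'$ cannot affect it because it concerns $a_c$ alone.
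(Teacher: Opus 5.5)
Your proposal is correct and follows essentially the paper's route. The crash/replay clauses go through monotonicity and idempotence from A3, A5, and A7--A9 together with the $(qid,h_q)$ issuance keying, and locality holds because none of $\chi$, the leaf bodies, channel reconciliation, rebind receipts, or frontier facts on which $V_c(E)$ depends mentions $r'$. Routing the non-$Q$ clause through Theorem~\ref{thm:merge-algebra} rather than through durable per-transaction idempotency keys repackages the same idea: exact replay is a no-op and conflict is absorbing.

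One correction: only the components of $\chi$ are frozen by the cut (A3). $D_c$ evolves after the cut, and the pre-fence and rebind receipts are created after it. What fixes them for the locality argument is their binding into the leaf digests and into $H(E)$ inside $\mathcal C_c$.
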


\begin{theorem}[Conditional convergence]
\label{thm:conditional-liveness}
Under A1--A11, if every carrier in \(D_c\) eventually becomes effect-closed,
terminal, \(\mathsf{PRE\_FENCE\_COMMITTED}\), or validly rebound; every
transient pre-certification unknown and resolvable state disagreement clears;
and no authenticated same-key equivocation occurs, then the protocol eventually obtains all compatible leaves,
reconciles every token in the frozen frontier \(T_c\), and issues \(\mathcal C_c\).  Until those
premises hold, the verifier remains \(N\) or \(U\) rather than issuing a false
certificate.
\end{theorem}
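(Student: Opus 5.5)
The proof splits into a safety half and a progress half, argued in that order.

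\emph{Safety half (``until those premises hold'').} This part needs no new argument. Any issuance of \(\mathcal C_c\) requires \(V_c(E)=Q\). By Theorem~\ref{thm:composition-soundness}, issuance entails \(\operatorname{RQ}(c)\) under A1--A9, so the verifier can never emit a false certificate. By Theorem~\ref{thm:crash-replay}, crashes and replays cannot turn a non-\(Q\) object into \(Q\). Hence, before the premises are met, every reachable verdict is \(N\) or \(U\).

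\emph{Progress half, step 1 (finite, stable target).} By A3 the cut freezes \(T_c\) and \(h_m\). By A11, \(D_c\), the cut-relevant commitment sets \(K_c^p\), and \(T_c\) are finite. By Lemma~\ref{lem:cut-nonissuance}, issuer transitions add no new old-atom authority after \(c\). The only post-cut additions to \(D_c\) are carriers materialized from tokens in \(T_c\), and there are finitely many of those. So the set of obligations the verifier must discharge is a fixed finite set.

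\emph{Progress half, step 2 (every leaf reaches local \(Q\)).} Fix a manifested endpoint \(p\).
\begin{itemize}
\item By A10, \(p\) eventually installs \(b_p(c)\), with \(c\prec b_p(c)\).
\item By A5, the resulting frontier enumerates \(K_c^p\) exactly. Fair action in A11 then produces the typed receipts \(\rho^{\mathrm{pfc}}_{p,k}\) and closed frontiers.
\item By hypothesis, each \(x\in D_c\) in the slice reaches an allowed terminal state. A rebind counts only when it is admissible; Lemma~\ref{lem:no-laundering} and Theorem~\ref{thm:independent-support-preservation} guarantee that the receipts it produces verify.
\item By A10, the stable scan returns and \(U_p\) empties, because transient unknowns clear and, by hypothesis, no equivocation exists.
\end{itemize}
Taking \(\omega_p\) after all of these events gives a locally \(Q\) leaf \(\ell_p\) covering \(p\)'s scope.

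\emph{Progress half, step 3 (channels close).} For every \(t\in T_c\), A11 gives eventual delivery to a covered destination, or an authenticated fence-reject or enforced-expiry disposition. A7 makes that disposition unique, so \(\pi_t(B(E))=T_c\). A7's sole-creation rule together with \(\mathsf{SEND}(t)\prec c\) gives \(\pi_t(O(E))=T_c\). Thus \(\operatorname{ChannelsClosed}(E)\) holds once the leaves' watermarks are taken beyond every acknowledgement.

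\emph{Progress half, step 4 (assembly and issuance).} By A6 the manifest is complete, and leaves with the common header \(\chi\) are mutually compatible. By Theorem~\ref{thm:merge-algebra}, merging them in any arrival order yields the same \(E\) without conflict, so \(\operatorname{Covered}(E,m)\) holds. Clause~1 of Definition~\ref{def:verdict} then finds no \(N\) witness:
\begin{itemize}
\item A5 rules out post-barrier admissions;
\item the carrier states from step 2 rule out open carriers;
\item step 3 rules out unacknowledged tokens.
\end{itemize}
Clause~2 finds no \(U\) cause, so clause~3 returns \(Q\). A11 fairness then ensures the issuance action is eventually taken. A9 ensures that a crash-retry with the same \(qid\) and \(h_q\) returns the same \(\mathcal C_c\) rather than failing.

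\emph{Main obstacle.} The hardest step is the existence of a single evidence-through boundary \(\omega_p\) per leaf that is simultaneously late enough and stable. It must come after that endpoint's final terminal event, while watermarks stay unstale and no late accepted token reopens a carrier after the scan. I would handle this in two parts. First, use finiteness to take \(\omega_p\) beyond the last of the finitely many relevant events. Second, use monotonicity (A9: closed frontiers, fences and terminal dispositions never roll back) to show that later scans only confirm, and never retract, a local \(Q\) verdict. If staleness is defined relative to wall-clock freshness, an extra appeal to A10's eventual re-scan would be needed.
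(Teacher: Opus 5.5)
Your proposal is correct and follows essentially the paper's route. The shared steps are: a finite, frozen population of dependent carriers, tokens, and cut-relevant commitments (Lemma~\ref{lem:cut-nonissuance}, A3, A11); barriers and frontiers from A10 and A5; fair discharge of every outstanding obligation, with monotonicity preventing reopening; and a final check that no clause of Definition~\ref{def:verdict} yields \(N\) or \(U\). One small difference is that the paper derives the ``remains \(N\) or \(U\)'' clause directly from Definition~\ref{def:verdict}, since any undischarged obligation is a concrete \(N\) witness or a \(U\) fact. That is tighter than your inference from Theorem~\ref{thm:composition-soundness}, because soundness constrains only what an issued certificate implies and is silent about pending unknowns.
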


\section{Root-Scoped Quiescence Protocol}
\label{sec:design}

The protocol turns a shutdown request into a durable, root-relative
security transition. Figure~\ref{fig:root-quiescence-architecture} separates the authority
cut from provider actuation and evidence checking. The cut ledger is the sole
linearization point for retiring an epoch; adapters then install their own
barriers, close local frontiers, and return typed leaves. The aggregator never
infers local facts from a cancellation response, and the checker never invokes
the controller's decision procedure.

\begin{figure*}[t]
\centering
\begin{tikzpicture}[
  node distance=8mm and 3mm,
  box/.style={draw, rounded corners=2pt, align=center, minimum height=9mm,
    text width=22mm, inner sep=2pt, font=\small},
  flow/.style={-{Latex[length=2.1mm]}, line width=0.75pt},
]
\node[box, draw=protocolBlue, fill=protocolBlueFill] (issuer) {root authority\\issuer};
\node[box, draw=protocolBlue, fill=protocolBlueFill, right=of issuer] (cut) {durable epoch\\cut ledger};
\node[box, draw=protocolPurple, fill=protocolPurpleFill, right=of cut] (support) {support compiler\\and projection};
\node[box, draw=protocolGreen, fill=protocolGreenFill, right=of support] (aggregate) {leaf/cutset\\aggregator};
\node[box, draw=protocolGreen, fill=protocolGreenFill, right=of aggregate] (certificate) {root-scoped\\certificate};

\node[box, draw=protocolAmber, fill=protocolAmberFill, below=13mm of support, xshift=-18mm]
  (adapters) {runtime/provider\\adapters};
\node[box, draw=protocolAmber, fill=protocolAmberFill, right=11mm of adapters]
  (frontiers) {ingress, channel, and\\effect-frontier barriers};
\node[box, draw=protocolRed, fill=protocolRedFill, below=13mm of certificate]
  (checker) {independent evidence\\checker};

\draw[flow, protocolBlue] (issuer) -- (cut);
\draw[flow, protocolPurple] (cut) -- (support);
\draw[flow, protocolGreen] (support) -- (aggregate);
\draw[flow, protocolGreen] (aggregate) -- (certificate);
\draw[flow, protocolAmber] (support) -- (adapters);
\draw[flow, protocolAmber] (adapters) -- (frontiers);
\draw[flow, protocolGreen] (frontiers) -| (aggregate.south);
\draw[flow, protocolRed] (certificate.south) -- (checker.north);
\end{tikzpicture}
\caption{Protocol separation. The root cut retires issuer-side creation and
expansion of old-epoch authority;
adapter-specific barriers and frontier leaves establish local facts; exact
cutset aggregation establishes the composed predicate; a separately
implemented checker validates the evidence object.}
\Description{A two-level flow diagram. The top level moves from the root
authority issuer through a durable epoch cut, support projection, leaf
aggregation, and a root-scoped certificate. Runtime and provider adapters plus
frontier barriers feed evidence upward, and a separate checker validates
the final certificate.}
\label{fig:root-quiescence-architecture}
\end{figure*}

\subsection{Step 1: linearize and seal the root cut}

For root $r$ at epoch $e$, \mcode{BeginShutdown} durably records retirement of
$a_c=(r,e,h_g)$ and advances the issuer to the new epoch $e+1$. The record binds a fresh cut
identifier, the endpoint/sink/channel manifest version, the policy and adapter
profile digests, and the initial graph commitment. The issuer subsequently
rejects any authority-issuing, delegating, reactivating, or enlarging transition
whose selected support contains $a_c$.

This operation is atomic at the authority ledger, not across all providers.
The distributed shutdown generation remains draining until its registered
leaves prove their adapter-specific barriers. An already authorized carrier
can still reach an effect sink before that sink's barrier; the cut neither
erases the carrier nor silently classifies its effect as rejected. Repeating \mcode{BeginShutdown}
returns the same cut; it cannot create another retirement point or decrease the
epoch.

\subsection{Step 2: project actual old-root dependence}

The support compiler evaluates the \emph{selected} witness of each live
authority instance. If it contains $a_c$, every attached carrier and
generator enters the cut's target set. The complete support antichain is used
only to find a possible replacement; its mere existence never edits the
selected witness.

Figure~\ref{fig:support-rebind} shows why the distinction matters. The
antichain $\{\{A\},\{B\}\}$ admits independent continuation, but only after a
compare-and-swap replaces the $A$-selected authorization with a fresh
$B$-selected instance for the same remaining operation and no wider effect
envelope. By contrast, $\{\{A,B\}\}$ is conjunctive: retirement of $A$ removes
the only sufficient witness, so the carrier must drain, terminate, expire under
a trusted bound, or remain nonpositive.

\begin{figure}[t]
\centering
\begin{tikzpicture}[
  every node/.style={font=\small},
  root/.style={circle, draw=protocolBlue, fill=protocolBlueFill, minimum size=7mm},
  work/.style={draw, rounded corners=2pt, fill=protocolGreenFill, minimum width=17mm,
    minimum height=7mm},
  arr/.style={-{Latex[length=1.8mm]}, line width=0.7pt},
]
\node[root] (a1) {$A$};
\node[root, below=9mm of a1] (b1) {$B$};
\node[work, right=17mm of $(a1)!0.5!(b1)$] (alt) {$x_{\mathrm{alt}}$};
\draw[arr, protocolRed] (a1) -- node[above, sloped] {selected} (alt);
\draw[arr, protocolBlue, dashed] (b1) -- node[below, sloped] {available} (alt);

\node[root, right=25mm of alt, yshift=4.5mm] (a2) {$A$};
\node[root, below=9mm of a2] (b2) {$B$};
\node[work, right=17mm of $(a2)!0.5!(b2)$] (and) {$x_{\mathrm{and}}$};
\draw[arr, protocolRed] (a2) -- (and);
\draw[arr, protocolRed] (b2) -- (and);

\node[below=13mm of alt, align=center] {$\{\{A\},\{B\}\}$:\\explicit $A\!\rightarrow\!B$ rebind};
\node[below=13mm of and, align=center] {$\{\{A,B\}\}$:\\no surviving witness};
\end{tikzpicture}
\caption{Minimal-support antichains distinguish alternative from conjunctive
authority. Dashed availability is not authorization until an atomic rebind
selects it.}
\Description{On the left, roots A and B independently support one carrier;
the A edge is selected and the B edge is merely available until rebind. On the
right, A and B jointly support a carrier, so retirement of A leaves no
sufficient witness.}
\label{fig:support-rebind}
\end{figure}

\subsection{Step 3: install adapter-specific barriers}

Each registered adapter maps the logical cut into enforcement at every place
where old-root authority can become durable or cross a protected frontier.
Typical barriers include an issuer tombstone, a task-generation fence, a queue
ingress epoch check, a trigger disable plus generation watermark, a credential
deny epoch, and an effect sink minimum epoch. Every create, accept, trigger,
prepare, and commit path checks the current barrier at its own linearization
point. An acknowledgement emitted before that check is lifecycle evidence, not
closure evidence.

For sink $p$, the adapter first authenticates cut $c$ and then installs durable
local barrier $b_p(c)$, establishing $c\prec b_p(c)$. Commitment admissions and
$b_p(c)$ share the sink's local total order. A cut-relevant acceptance
authorized by $a_c$ and ordered before $b_p(c)$ is first recorded as an
accepted admission. After the barrier is durable, the provider finalizes a
typed receipt binding the cut, sink,
operation and effect digests, complete selected authorization witness,
admission position, barrier position, and receipt-record position. This order
covers both cut-time outstanding acceptances and cut-to-barrier acceptances.
An attempt whose selected support contains the retired atom and
is ordered after $b_p(c)$ is rejected. This local rule
classifies requests concurrent with cut propagation without comparing a root
ledger sequence number to a provider sequence number.

Adapters classify cut-target objects into typed terminal dispositions:
\mcode{RETIRED}, \mcode{FENCED}, trusted \mcode{EXPIRED},
\mcode{PRE\_FENCE\_COMMITTED}, or \mcode{REBOUND}. The pre-fence disposition is
valid only with the ordered receipt and a closed future-use frontier. A prepared or accepted item
is not terminal merely because cancellation was requested. An unavailable
query or unverifiable fence produces \mcode{UNKNOWN}.

\subsection{Step 4: conserve local and transferred carriers}

At any evidence revision, an old-root carrier has exactly one accountable
location: a provider-local registry, a uniquely identified cross-provider
transfer, a terminal/fenced disposition, or a completed rebind. Emission
atomically changes ownership from local to in transit. Acceptance of the same
token changes it from in transit to destination-local; rejection produces a
terminal receipt. The old-atom token-creation record precedes the root cut and
is included in its frozen token frontier; a later delivery cannot create a new
token under that epoch. An accepted acknowledgement precedes any terminal
accounting of the destination carrier. Tokens are exact identities, not an aggregate counter, so
two different messages cannot cancel arithmetically. Since send and terminal
acknowledgement records have different types, composition compares their
canonical token projections, \(\pi_t(O(E))=\pi_t(B(E))\), together with
disposition validity; it never asserts equality of the heterogeneous records.

For FIFO channels, marker order can determine the cut. For non-FIFO channels,
the profile supplies unique sequence identities, a gap-free watermark, and
terminal receipts. A channel that satisfies neither profile remains
\UQ. This rule directly excludes the execution in which both endpoints report
local emptiness while one message lies between their scans.

\subsection{Step 5: produce provider-frontier leaves}

Table~\ref{tab:leaf-schema} lists the minimum logical content of a leaf. A
production encoding can add provider-specific receipts, but it cannot omit a
field used by the composition predicate. The signature authenticates origin
and content; semantic soundness comes from admission of the adapter contract
and its evidence source.

\begin{table}[t]
\caption{Logical leaf-certificate obligations.}
\label{tab:leaf-schema}
\small
\begin{tabularx}{\textwidth}{L{0.20\textwidth} Y Y}
\toprule
Field group & Bound fact & Acceptance condition \\
\midrule
Identity & provider, adapter profile, key identity, schema version & admitted and unambiguous \\
Cut header & root, retired epoch, cut id, cut-record digest & exact agreement across leaves \\
Coverage & manifest/configuration/graph-scope digests; covered ingress, frontier and channels & union equals sealed manifest \\
Barrier evidence & fence receipts, provider-local admission/barrier positions, log watermarks & causal observation of the cut and valid local ordering \\
Local state & open carriers, generators, accepted operations, typed terminal dispositions & no known old-root live blocker; every cut-relevant old-atom pre-fence acceptance has its exact receipt \\
Transfer state & typed SEND and terminal ACK records with unique token maps & exact projected accounting; ACCEPT binds a covered destination carrier \\
Rebind state & exact old/new root-epoch atoms, selected witness, operation and envelope digest & independently current support and atomic replacement \\
Uncertainty & unavailable queries, conflicts, gaps, stale observations & empty for a positive leaf \\
Integrity & evidence root, evidence-through sequence, snapshot digest, issued time, signature & every referenced fence, receipt, and ACK precedes the signed leaf \\
\bottomrule
\end{tabularx}
\end{table}

\subsection{Step 6: compose the cutset}

The aggregator accepts revisions only under an exact-key merge. Repeating an
identical fact is idempotent; two values for the same fact key are conflicting.
Headers, manifests, configurations, cut identities, and profile versions must
agree. Compatible leaf facts form a join-semilattice, so the result is
independent of leaf arrival order, duplication, or aggregation tree.

Known blockers dominate the verdict: any live old-root carrier, unaccounted or
post-barrier accepted operation, generator, or unmatched in-flight token yields
\NQ. In the absence
of a known blocker, missing coverage, a leaf marked unknown, a stale fence, a
token gap, or a configuration conflict yields \UQ. The aggregator returns \Q{}
only when the issuer cut is durable, coverage is exact, every required barrier
is installed, all cut-target carriers have admissible dispositions, channel
tokens close, and every continuation either excludes the retired atom by verified
rebind or is effect-closed.

\subsection{Certificate issuance and replay}

The certificate binds the cut-record digest, graph root, endpoint and sink
catalog digests, fence and terminal receipt roots, alternate-support proof
root, configuration digest, issuance sequence, and certificate identity. Its
issuance transition moves the generation from draining to quiescent in the
same durable transaction. An exact retry carrying the same request identifier
and canonical candidate digest returns the identical committed certificate. A
different request identifier for the completed generation, or reuse of the
identifier with a different body, is rejected as a conflicting remint. A changed manifest, adapter profile,
policy, or root epoch requires a new cut and new evidence.

\section{Protocol Instantiations}
\label{sec:instantiations}

The protocol is provider-neutral, but a leaf is meaningful only through a
concrete adapter contract. This section specifies the executed reference core
and the correspondence obligations that any deployment profile must satisfy.
A logical endpoint shape in the reference artifact is not evidence about a
remote service.

\subsection{Reference ledger and protected sink}

The reference implementation uses a file-backed ledger with canonical JSON,
domain-separated SHA-256 digests, a contiguous hash-linked event log, and
write--fsync--rename--directory-fsync persistence. Mutating transactions take
an exclusive file lock, reload the durable state while holding that lock, and
validate the complete state before committing. Thus, a process that opened the
ledger before a concurrent cut cannot authorize from its stale in-memory
snapshot or overwrite the cut with a lost update. Each root is
\mcode{ACTIVE}, \mcode{DRAINING}, or \mcode{QUIESCENT}. A durable cut increments
the root epoch and freezes the obligation identifiers whose selected witness
contains the retiring root. Sink fences record the minimum admitted epoch; the
reference authorization check requires the exact grant-bound atom set selected
for the obligation, the registered obligation/sink binding, and satisfaction
of the sink-local fence. The root cut prevents creation of new authority from
the retired atom, but deliberately does not stand in for the sink fence: an
already scheduled old-atom attempt remains admissible in the cut-only arm until
the protected sink installs its barrier.

The ledger enforces exact selected witnesses for both alternative and
conjunctive supports, monotone obligation and channel transitions,
conflict-free leaf records, registered global token identity, and current
alternate roots for explicit rebind. Root identifiers in a support template
are instantiated into exact current grant-bound atoms when a witness is
selected or rebound; the alternate-root case advances root $B$ to epoch 2
before rebind and admits only that epoch-2 atom. A terminal channel cannot be
reopened, an unregistered leaf token cannot be introduced as evidence, and a
strict subset of a conjunctive witness cannot authorize an effect.

The reference channel path records a typed \mcode{SEND} binding token identity,
source, destination, payload, authorization atoms, and emission sequence. Its
terminal acknowledgement binds a disposition and evidence; an
\mcode{ACCEPTED} disposition additionally names the destination carrier into
which the token was admitted. A \Q{} leaf is admitted only after the relevant
sink fences, obligation receipts, and terminal channel acknowledgements have
already entered the durable log. The leaf binds an
\mcode{evidenceThroughSequence} and a digest of the endpoint-local evidence
snapshot through that sequence, so a later record cannot retroactively satisfy
an earlier positive leaf.

A pre-fence commitment is terminally accounted only by a typed receipt created
after the sink barrier and binding the cut, obligation, operation, effect,
sink, exact selected authorization atoms, admission sequence, and barrier
sequence. The executed pre-fence case covers both an accepted admission that
is already outstanding when the cut occurs and a second admission accepted
between the cut and barrier; each receives its own exact post-barrier receipt.
Certificate retry identity is the pair
$(\mathit{requestId},\mathit{candidateDigest})$: an exact pair returns the same
committed certificate, reuse of the request identifier with another candidate
body is rejected, and a different request identifier cannot remint the same
generation. This is a provider-free executable semantics: no network request,
cloud resource, credential, payment, or production effect is performed. A
protected effect is an inert typed record whose acceptance is decided at the
local sink.

\subsection{Cancellation-race worker}

The late-effect test suite forks an independent worker process. In every causal-separation
case, the worker first acknowledges that the protected unit has been scheduled;
only then may the controller cut or fence. The harness next observes an explicit
\mcode{EFFECT\_READY} barrier, sends \mcode{CANCEL}, and requires the first
matching post-cancel arrival to be \mcode{CANCEL\_ACK}; an earlier effect arrival
fails the case. Each case also emits a case-bound, nonce-bearing, hash-linked
harness transcript. The worker retains the already scheduled unit and, only
after enqueuing the acknowledgement, presents the same old-atom effect to the
parent-side protected sink. The three arms differ only in the intervention applied after scheduling:
(i) cancellation only leaves the old effect admissible; (ii) a root cut without
a sink fence prevents new authority expansion but still leaves that already
scheduled effect admissible; and (iii) the same cut followed by the sink-local
fence rejects the identical effect at its linearization point. The design
therefore identifies the fence, rather than cancellation timing or work
creation timing, as the cause of rejection.

Durability cases reconstruct the ledger after fencing and exercise a process
that opened the ledger before the cut. The latter must reload the cut under the
transaction lock before evaluating its attempt. Further cases exercise dual
pre-barrier admission accounting with post-barrier receipts, leaf snapshot
ordering, typed channel \mcode{SEND}/terminal-acknowledgement accounting, exact
certificate retry versus body conflict and remint, epoch-2 alternate-root
rebind, conjunctive-witness enforcement, ghost leaf-token rejection, and the
irreversibility of terminal channel state. The checker validates the recorded
transcript chain, nonce, registered message order, rejection entries, and
their bindings to durable events. The transcript is evidence about this
harness execution; it is not presented as an observation of an external
provider. These executions test concrete process, ordering, and persistence
behavior while remaining isolated from external providers.

\subsection{Endpoint correspondence contracts}

Table~\ref{tab:endpoint-contracts} defines representative correspondence
contracts for concrete adapters. MCP and A2A cancellation responses retain
their specified lifecycle meanings; an adapter may add only evidence obtained
from authenticated query, fence, receipt, and sink interfaces
\cite{mcp2026tasks,a2a2026spec}. A profile that cannot supply a required field
must report it as unavailable, causing the affected leaf to remain \UQ.

\begin{table}[t]
\caption{Endpoint correspondence contracts for conforming deployments.}
\label{tab:endpoint-contracts}
\small
\begin{tabularx}{\textwidth}{L{0.17\textwidth} Y Y Y}
\toprule
Endpoint & Persistent carrier & Required barrier/evidence & Nonpositive condition \\
\midrule
MCP asynchronous task & task state, provider work, descendants & cut-bound cancel receipt, terminal query, descendant/transfer coverage, protected sink fence & cooperative ack without terminal/frontier evidence \\
A2A delegated task & remote task, artifact, delegated child & task correlation, terminal or fence receipt, outbound child tokens & unsupported cancel, missing child catalog, late artifact path \\
Credential issuer & access/refresh token and derived credentials & issuer tombstone/introspection, propagation bound, derived-token closure & unknown derivation, cache bypass, untrusted expiry \\
Durable queue or trigger & message, retry, cron, webhook, generator & ingress generation fence, gap-free watermark, accepted-message terminal receipts & forgotten generator, watermark gap, unfenced consumer \\
Effect reservation/gateway & reservation, prepare, settlement & cutoff-aware reserve/commit gate and terminal/no-effect receipt & ambiguous prepare/commit, bypassable effect sink \\
\bottomrule
\end{tabularx}
\end{table}

\subsection{Runtime shutdown IR}

A runtime adapter maps native task, session, and tool events into the following
shutdown IR:

\[
\begin{aligned}
\langle{}&\mathit{targetRoot},\mathit{rootEpoch},\mathit{caller},
\mathit{taskCorrelation},\\
&\mathit{endpointCatalogDigest},\mathit{sinkCatalogDigest},
\mathit{barrierPolicyDigest},\mathit{runtimeBoundary}\rangle .
\end{aligned}
\]

The IR does not treat a runtime as providing root-scoped quiescence by itself.
It makes correlations and omitted evidence mechanically visible. A conforming
mapping freezes its source boundary, produces the complete normalized object,
and fails closed when a root, epoch, task, endpoint, sink, policy, or runtime
identity field is absent or misbound. Runtime-specific lifecycle events remain
inputs to the provider and sink contracts; they do not replace them.

\subsection{External-field correspondence}

A provider profile classifies each protocol or service field as direct,
derived under an explicit rule, supplementary deployment evidence, or
unavailable. MCP task state, A2A lifecycle state, OAuth revocation and
introspection, and queue or scheduler receipts each cover only part of the leaf
schema \cite{mcp2026tasks,a2a2026spec,rfc7009,rfc7662}. A profile cannot treat
the union of unrelated documents as one implementation: it identifies the
concrete source for every required field, and any unavailable field keeps the
corresponding leaf \UQ.

\subsection{Cryptographic and trust profiles}

The reference artifact uses content digests to make deterministic replay and
tamper checking explicit; it does not present those digests as provider
authentication. A deployment profile additionally binds leaf signing keys,
key status, signature algorithm, replay domain, certificate schema, and
provider identity. Even a valid signature proves only which admitted principal
made the assertion. Coverage and semantic truth continue to depend on complete
mediation, authoritative observations, and the leaf contract stated in
Section~\ref{sec:system}.

\section{Evaluation}
\label{sec:evaluation}

The evaluation tests the provider-free durable-ledger artifact, including
strict causal separation among cancellation-only, cut-only, and
cut-plus-fence control. It evaluates the executable protocol invariants and
does not transfer those results to unexecuted runtime or provider interfaces by
renaming a logical endpoint as a deployed service.

\subsection{Research questions}

\begin{description}[leftmargin=3.1em,style=nextline]
  \item[RQ1: Separation.] Can an acknowledged cancellation still yield a
  protected effect under the baseline, while a root cut and sink fence reject
  the identical old-epoch attempt?

  \item[RQ2: Decision soundness.] Do known live obligations and channels remain
  \NQ, and do missing endpoint or fence facts remain \UQ?

  \item[RQ3: Precision.] Can shutdown of root $A$ preserve the same remaining
  work under an independently sufficient current root $B$ through an explicit
  exact-witness rebind, without reusing $A$'s grant-bound authorization atom?

  \item[RQ4: Durability and evidence integrity.] Do persisted fences survive
  restart and stale-process attempts; do transaction locks prevent lost
  updates; and do post-barrier exact commitment receipts, evidence-bounded
  leaves, typed channel sends and terminal acknowledgements, exact
  request/body retry semantics, remint rejection, channel monotonicity, and a
  separately implemented checker preserve the certificate boundary?

\end{description}

\subsection{Executable 17-case late-effect test suite}

The refreshed corpus contains 17 cases (Table~\ref{tab:late-effect-test-corpus}). Each
case runs in a fresh temporary ledger. For the three causal-isolation arms, the
worker reaches a \mcode{STARTED} barrier before cancellation, cut, or fence, so
the protected unit is scheduled before the interventions diverge. The harness
then observes \mcode{EFFECT\_READY}, sends cancellation, and accepts only an
actual \mcode{CANCEL\_ACK}-before-effect arrival order. Every case
also carries a case-bound nonce and hash-linked harness transcript. The runner
embeds the registration for reporting, while the checker holds a separate
17-case oracle and imports no controller, corpus, worker, or adapter module. It
independently recomputes the transcript and ledger hash chains, leaf evidence
snapshot digests, certificate semantic snapshot digests, and result-set digest.
It then checks the registered transcript order and rejection entries, their
bindings to durable effect/intervention records, the root generation, exact
selected authorization atoms, and each effect decision reconstructed from its
obligation binding and event-time sink fence. It further checks cut-bound fence records, post-barrier commitment
receipts, leaf evidence boundaries, typed channel sends and acknowledgements,
post-cut effects, send-before-cut and accepted-ack-before-destination-terminal
causality, channel monotonicity, and certificate request/body identity.
This reconstruction establishes consistency and registered semantics of the
serialized evidence; process scheduling before it was recorded and behavior
of unexecuted external services are outside that checker's observation domain.

\begin{table}[t]
\caption{Provider-free 17-case late-effect test-suite registration.}
\label{tab:late-effect-test-corpus}
\footnotesize
\begin{tabularx}{\textwidth}{L{0.31\textwidth} L{0.20\textwidth} L{0.18\textwidth} Y}
\toprule
Scenario group & Cases & Registered outcome & Registered distinguishing fact \\
\midrule
Cancellation acknowledgement only & 2 & baseline false-stop & late epoch-1 effect accepted after acknowledgement \\
Root cut without sink fence & 1 & \NQ & already scheduled old-atom effect remains admissible \\
Root cut and sink fence & 2 & \Q & identical already scheduled effect rejected at the sink \\
Restart and stale-process interleaving & 2 & \Q & persisted barrier is reloaded under the transaction lock \\
Alternate-root rebind & 1 & \Q & exact current root-$B$ epoch-2 atom replaces, rather than edits, root-$A$ authorization \\
Opaque endpoint / missing fence & 2 & \UQ & absent evidence never becomes a positive certificate \\
Typed pre-fence commitments & 1 & \Q & two exact receipts, finalized after the barrier, account for pre-cut-outstanding and cut-to-barrier admissions \\
Open obligation / in-flight token & 2 & \NQ & known carrier blocks positive certification \\
Exact retry / body conflict / remint & 1 & replay safe & the same request/body pair is idempotent; same-ID/different-body and different-ID attempts are rejected \\
Conjunctive witness enforcement & 1 & rejected & a strict subset of the selected conjunction cannot authorize \\
Ghost leaf token & 1 & rejected & an unregistered token cannot appear in frontier evidence \\
Terminal channel reopen & 1 & rejected & distinct sender/receiver leaves reconcile a pre-cut send and accepted terminal ACK bound before destination termination; terminal state is irreversible \\
\bottomrule
\end{tabularx}
\end{table}

The worker is a real child process, but endpoint labels in this layer denote
logical shapes, not claims about remote MCP, A2A, OAuth, or scheduler
implementations. Effects are inert typed records evaluated by a local protected
sink. The stale-process case opens a second ledger instance before the cut and
attempts the protected effect only after the first process commits the cut and
fence. The case fails if the second process can authorize from its stale
snapshot or overwrite the intervening state. The pre-fence case separately
admits one effect before the cut and one after the cut but before the barrier,
then requires both exact receipts to be durably finalized after the barrier
before its positive leaf. The channel case records a typed send between distinct
logical endpoints and an \mcode{ACCEPTED} terminal acknowledgement naming the
destination obligation before that obligation's terminal receipt and the two
endpoint leaves. This construction isolates causal, accounting, and
persistence distinctions while keeping the experiment deterministic and
provider-free.

\subsection{Comparison boundary}

The executed baselines are cancellation acknowledgement alone and a root cut
without a sink fence. They share the same scheduled unit and protected-effect
oracle with the complete cut-plus-fence arm, so the comparison isolates the
mechanism responsible for rejecting the late effect. Process exit, token
revocation, subtree cascade, graph-targeted revocation, and conjunctions of
endpoint-terminal states are analyzed through their stated semantics and the
formal counterexamples in Sections~\ref{sec:background} and
\ref{sec:formal}; they are not reported as measured remote-service baselines.
A provider-local closure verifier remains a compatible leaf mechanism rather
than a substitute for the root-scoped composition protocol.

\subsection{Adversarial checker evaluation}

The separate checker additionally faces 44 semantically modified and
consistently rehashed traces. The mutations cover accepted-effect and
fence safety; effect-code, operation, carrier, and authorization-atom binding;
cut-frontier completeness and root-epoch causality; exact pre-fence accounting;
leaf, certificate, and request identity; typed channel payload, destination,
and terminal-state integrity; and complete worker, retry, and rejection
transcripts. Cross-case substitutions are resealed under the receiving case,
so rejection depends on semantic reconstruction rather than a stale digest.
The suite also rejects a SEND moved after its frozen cut and an accepted
acknowledgement moved after destination-terminal accounting.

\subsection{Metrics and acceptance criteria}

Primary registered outcomes are false-quiescence certificates,
post-certificate old-root frontier crossings, unknown-to-Q collapses, residual
authority and obligation counts, exact phase/code matches, alternate-root
preservation, and replay/tamper detection.

The artifact gate requires 17/17 registered runner outcomes, 17/17 separate
checker decisions, rejection of all 44/44 semantically rehashed
regressions, zero false certificates, and zero registered post-certificate
old-root crossings. Every insufficient-evidence case must remain \UQ; every
known live carrier must remain \NQ; and every registered independently
authorized alternate-root case must be preserved. Performance is descriptive
and does not serve as evidence for safety.

\section{Executable Evidence}
\label{sec:evidence}

The evaluated system is the executable durable-ledger artifact, its 17-case
late-effect test suite, and the separately implemented checker. Numerical outcome claims
enter this section only from fresh runner and checker receipts whose digests
bind the complete trace set. The refreshed runner matches all 17 registered
outcomes, and the refreshed separate checker agrees on all 17 traces and
rejects all 44 semantically modified and consistently rehashed regressions.
Table~\ref{tab:initial-results} reports this executable evidence.

\begin{table}[t]
\caption{Fresh-run evidence for the executable 17-case layer.}
\label{tab:initial-results}
\small
\begin{tabularx}{\textwidth}{Y L{0.23\textwidth} L{0.28\textwidth}}
\toprule
Registered check & Fresh result & Status \\
\midrule
Expected case verdict and effect outcome & 17/17 & pass \\
Separate checker agreement & 17/17 & pass \\
Cancellation-only / cut-only late effects accepted & 2/2 and 1/1 & pass \\
Cut-plus-fence, restart, and stale-process late effects rejected & 2/2, 1/1, and 1/1 & pass \\
Root decision totals & $\Q$: 6; $\NQ$: 3; $\UQ$: 2 & pass \\
Alternate-root epoch-2 rebind / dual pre-fence accounting & 1/1 and 2/2 admissions receipted & pass \\
Exact retry / same-ID body conflict / different-ID remint & 1/1, 1/1, and 1/1 & pass \\
Conjunctive witness / ghost token / terminal reopen & 1/1 each & pass \\
Evidence-bounded positive leaves & 11/11 & pass \\
Typed channel send / terminal ACK destination binding & 1/1 and 1/1 & pass \\
Semantically rehashed checker regressions rejected & 44/44 & pass \\
\bottomrule
\end{tabularx}
\end{table}

\subsection{Causal-isolation criterion}

All three arms use the same ordering: the worker reports \mcode{STARTED} before
intervention, reports \mcode{EFFECT\_READY} before cancellation, and enqueues
\mcode{CANCEL\_ACK} before emitting the same old-atom effect. The parent consumes
the first matching post-cancel arrival without type-selective reordering. Both
cancellation-only cases acknowledged intent and accepted the effect. The
cut-only case prevented new authority expansion but accepted the already
scheduled effect because no sink barrier existed. Both cut-plus-fence cases
rejected that effect at the protected sink. The restart and stale-process cases
also rejected it, for four mechanism rejections in total. The comparison
therefore attributes rejection to the sink-local fence, rather than to
cancellation timing, process exit, or post-cut work creation.

\subsection{Typed decision and identity criteria}

The two insufficient-evidence cases remained \UQ. The open obligation,
in-flight channel, and accepted cut-only effect remained \NQ. Six cases reached
\Q, and no nonpositive case carried a root-$A$ quiescence certificate.

The alternate-root case begins with support antichain
$\{\{A\},\{B\}\}$ and an exact selected root-$A$ witness. The execution first
advanced active root $B$ to epoch 2. The rebind then atomically instantiated
the alternative support as the exact current root-$B$ epoch-2 atom before
root-$A$ certification and permitted the protected effect only under that atom.
The conjunctive case rejected an obligation selected under $\{A,B\}$ when it
presented only $B$. These cases distinguish independent preservation from
authority laundering by label deletion or stale-epoch substitution.

The pre-fence case admitted two effects under the retiring atom: one before the
cut whose commitment remained outstanding at the cut, and one after the cut
but before the sink barrier. Only after the barrier did the ledger finalize a
separate typed receipt for each admission. Each receipt binds the cut,
obligation, operation, effect, sink, exact selected authorization atoms,
admission sequence, and barrier sequence; the case reached \Q{} only after both
receipts were present.

Each positive leaf binds the last preceding evidence sequence and a digest of
its endpoint-local evidence snapshot. The ledger admits such a leaf only after
the applicable fence, obligation receipt, and typed terminal channel
acknowledgement. The terminal-channel case recorded a typed send from one
logical endpoint to another before the cut, followed by an \mcode{ACCEPTED}
acknowledgement bound to the destination obligation before that obligation's
terminal receipt and before both endpoint leaves; the later attempt to reopen
the token was rejected. Leaf admission
also rejected the unregistered ghost token.

\subsection{Durability and checker separation}

The restart case reconstructed the ledger after the cut and fence and rejected
the old-atom attempt. The stale-process case opened a second instance before
the cut; its later transaction reloaded the durable state under the exclusive
lock and rejected the same class of attempt. Together they exercise
persistence, stale-snapshot exclusion, and lost-update prevention at the
authorization boundary.

The certificate identity case returned the same committed certificate for an
exact retry of the same $(\mathit{requestId},\mathit{candidateDigest})$ pair.
It rejected reuse of that request identifier with a different candidate body
as a request conflict and rejected a different request identifier as a remint
of the committed generation. The ledger's state validator accepted the
contiguous event sequences and predecessor digests in every final state.

The checker carries its own case oracle and implements canonicalization and
domain-separated hashing independently. It neither imports nor calls the
ledger, runner, worker, or corpus. It recomputes every durable-event and
harness-transcript hash-chain link, every leaf evidence-snapshot digest, every
certificate semantic digest, every case digest, and the result-set digest. Its
case rules reconstruct each effect decision from its obligation binding and
event-time sink fence, as well as the cut frontier, exact current support atoms, receipt
and fence ordering, typed channel send/acknowledgement bindings, destination
carrier for an accepted acknowledgement, send-before-cut and
accepted-ack-before-destination-terminal causality, certificate request/body identity,
and registered negative-operation outcomes. Its refreshed oracle agreed on
17/17 traces and reproduced the runner's result-set digest. Agreement is
therefore a distinct implementation check, not a second invocation of the
controller's verdict function.

For child-process and rejected-operation evidence, the checker validates the
case-bound nonce, transcript chain, registered entry order, rejection code,
and correspondence to the durable intervention/effect records. It does not
claim an independent observation of facts outside the serialized execution,
such as behavior of a remote provider that this provider-free harness never
invoked.

A separate adversarial regression harness modifies certified traces and then
recomputes the affected event chain, leaf and certificate digests, transcript
chain where applicable, case digest, and result-set digest. The checker
rejected 44/44 such traces. The registered mutations exercise effect/fence
decision reconstruction; cut, root-epoch, obligation, receipt, leaf, and
certificate identity; cut-frontier and negative-leaf completeness; exact
alternate and conjunctive support; pre-cut and cut-to-barrier accounting;
channel authorization, payload, destination, SEND, acknowledgement, and
terminal causality; and worker, stale-process, retry, and rejected-operation
transcripts. Several mutations splice internally well-formed fields from a
different registered case and then reseal every affected digest. The checker
therefore rejects the resulting trace by independently reconstructed semantics,
not because the mutation left a stale outer hash. In particular, it rejects a
SEND moved after the frozen cut and an accepted acknowledgement moved after the
destination carrier's terminal receipt. This gate prevents a later certificate
from concealing an earlier registered safety violation even when the modified
trace is rehashed consistently.

\subsection{Claim discipline}

The executable evidence comprises the durable ledger, protected-sink
authorization path, child-process race harness, 17-case registration, and
separate checker described above. Its refreshed runner matched 17/17 registered
outcomes, the checker agreed on 17/17 traces, and the checker rejected 44/44
semantically rehashed regressions. The executable evidence further includes exact post-barrier
accounting of both registered pre-barrier admissions, evidence-bounded positive
leaves, typed channel send/terminal-acknowledgement binding, exact-current
epoch-2 alternate-root rebind, request/body-bound certificate retry, and
case-bound hash-linked harness transcripts.

The cross-provider composition results follow from the formal model and proof
obligations in Section~\ref{sec:formal}. The provider-free artifact supplies an
executable instantiation of the ledger, cut, fence, carrier, channel, rebind,
leaf, and certificate invariants. Endpoint names in its 17-case corpus are
semantic categories, not assertions of runtime or remote-provider conformance.
A deployment claim requires concrete adapters to satisfy the correspondence
contracts in Section~\ref{sec:instantiations}; no such claim is inferred from
this artifact. This scope preserves a one-to-one mapping between every
empirical sentence and the system that produced its receipt.

\section{Related Work}
\label{sec:related}

\subsection{Task cancellation and credential revocation}

MCP Tasks and A2A expose asynchronous task lifecycles and cancellation
operations \cite{mcp2026tasks,a2a2026spec}. Their cooperative semantics avoid
promising an impossible instantaneous rollback once remote work has advanced.
The protocol treats their acknowledgements and terminal states as evidence
about registered task carriers, then additionally closes derived authority,
provider obligations, and cross-provider messages at a root-epoch cut.

Stop Means Stop measures cancellation and timeout enforcement gaps in agent
frameworks and evaluates an external effect gate with fence-on-cancel
\cite{stopmeansstop2026}. The protocol takes external mediation as an
admitted leaf capability and targets the distinct composition question: one
retiring authorization root across heterogeneous providers, in-flight
channels, and shared carriers with independently sufficient support.

OAuth token revocation standardizes invalidation of a named access or refresh
token \cite{rfc7009}. Heartbeat and short-lived credentials provide a temporal
bound on authority persistence when renewal stops \cite{heartbeat2026}. These
are useful closure mechanisms for credential leaves. The compositional
quiescence predicate is broader: it covers effects already accepted under the
credential, descendants and triggers derived from it, and continuing shared
work whose support must be rebound to another root.

\subsection{Agent tool execution and adversarial inputs}

ToolEmu demonstrates scalable risk testing for language-model agents operating
over consequential tool surfaces \cite{ruan2024toolrisks}, while AgentDojo
evaluates tool-using agents over untrusted data under prompt-injection attacks
\cite{debenedetti2024agentdojo}. These systems establish the practical
importance of adversarial tool execution but do not define an
authorization-retirement predicate. The protocol therefore treats plans and
tool-supplied inputs as untrusted and places its claim at authenticated,
mediated effect sinks.

\subsection{Temporary authority and the effect boundary}

Lingering Authority makes temporary resource/effect capabilities explicit in
a request--grant--invoke monitor and rejects replay of epoch-bound handles once
their grant episode closes \cite{lingeringauthority2026}. Its guarantee assumes
a sound typed catalog and complete mediation. It also states the precise
concurrency boundary relevant here: a call permitted before closure but still
in flight must be serialized before the effect or revalidated immediately at
that effect. The protocol supplies the subsequent root-wide accounting
layer. A retiring grant atom is cut across a declared set of heterogeneous
endpoints; work may remain live only through an independently sufficient
alternate-root witness; each cut-relevant carrier must be closed or rebound;
and each cut-relevant old-atom acceptance preceding its local barrier must
receive an exact typed disposition in a provider leaf.

Temporary Authority, Permanent Effects defines commit-time authorization in
terms of witness freshness, causal priority, effect binding, and eligibility,
and evaluates a fail-closed monitor on protected commit surfaces
\cite{temporaryauthority2026}. That rule is a natural sink-adapter obligation
for the root-scoped quiescence protocol. It decides whether one proposed durable effect may cross
its commit boundary; the root-scoped protocol additionally establishes which
authorization epoch is retiring and composes the outstanding carriers,
provider frontiers, and cross-provider messages needed to certify that no
further commit attributable to that root remains possible.

AID-Guard carries authorization through one provider-atomic effect and one
reservation lineage, revalidates the immutable request and provider state at
commit, and retains ambiguity until a terminal result or certified no-effect
with a delivery fence permits release or one successor
\cite{aidguard2026}. Its analysis explicitly treats revocation ordered after a
durable submission as non-retroactive: the submitted request must be
reconciled, not presumed cancelled. This is exactly the class of obligation a
provider-frontier leaf must settle. The composition unit differs:
AID-Guard's conditional uniqueness property is per reservation and supported
provider contract, whereas the protocol retires one selected root across a
coverage manifest, atomically rebinds shared work to an independently
sufficient root, and reconciles tokens crossing provider leaves before issuing
the root-relative certificate.

AIRGuard derives step-level authority and enforces normalized tool actions at
runtime. EBL-Core binds one fully materialized candidate to typed evidence and
a verifiable Decision Derivation in an Execution Release Contract, then
governs a separate single-use Execution Grant whose redemption and revocation
are linearized \cite{airguard2026,eblcore2026}. These systems reinforce the
operation-time mediation expected of a registered protected sink. Their decision unit is an action
or grant; root-scoped quiescence additionally freezes one authorization epoch,
accounts for its already admitted commitments and durable carriers, and
composes closure evidence across the declared provider cut.

\subsection{Revocation and residual authority}

Classical authorization research established the policy layer on which agent
revocation builds. Delegation Logic casts distributed authorization as a
proof-of-compliance problem over policies and credentials
\cite{li2003delegationlogic}. State-transition trust management makes
policy-changing actions explicit \cite{chander2001statetransition}; revocation
classifications separate distinct semantic choices
\cite{hagstrom2001revocations}; and RDM2000 gives a rule-based account of
multistep role delegation and revocation \cite{zhang2003delegationrevocation}.
DW-RBAC specializes delegation and revocation to workflow tasks
\cite{wainer2007dwrbac}.
These models determine which authority survives a policy transition. The
protocol adds a different post-decision obligation: it must drain, fence, or
rebind carriers and effects already admitted under the retired root across a
declared provider cut.

VERA targets exact graph revocation and preserves authorization that remains
reachable through valid alternate roots \cite{vera2026}. ResidualAuth studies
the state that must be retained to compute correct authorization after
revocation \cite{residualauth2026}. The protocol shares the precision goal
but asks a subsequent distributed question. After choosing what remains
authorized, can any carrier or in-flight obligation selected under the retired
root still produce a protected commit? The support antichain and atomic rebind
receipt connect exact multi-root preservation to the epoch cut; provider
frontiers and channel tokens then cover consequences released before their
provider-local barriers, including consequences admitted while the root cut
was propagating.

Bounded Agents' Agentic Principal Chain carries restricted authority and
cross-step composition state along delegation, and Earned Authority constrains
evolving agents with a fixed effect ceiling and evidence-governed changes
below it \cite{boundedagents2026,earnedauthority2026}. Both make authorization
history and non-amplification explicit while a grant remains usable.
The protocol addresses the complementary retirement predicate: after a
particular root atom is cut, it certifies that no manifested carrier or
in-flight transfer can use that atom for a future protected acceptance.

This difference separates the protocol from subtree cascade. Cascading along
all descendants can eliminate some late effects, but it over-revokes shared
principals and still need not close a message, reservation, or prepared commit
already transferred outside the enumerated subtree. The protocol closes the
root-selected effect consequence while preserving a principal only through a
new selected witness that excludes the retired atom.

\subsection{Agent and provider closure}

Distributed proof construction shows how authorization evidence can be
assembled across credential holders without collapsing policy proof into a
central oracle \cite{bauer2005distributedproving}. The verifier here also
composes authenticated evidence, but its evidence concerns post-cut carrier
and effect closure rather than proof that a new access request is authorized.

Bounded Agent Closure provides a closure vocabulary spanning authority,
execution, commitment, and operational state, together with verification of
submitted closure evidence
\cite{boundedagentclosure2026,risu2026boundedclosurespec}. The protocol is
an active execution protocol for producing a particular root-scoped evidence
object across heterogeneous endpoints. It begins with an atomic epoch cut,
performs issuance freeze and sink fencing, requires explicit alternate-root
rebind, and composes compatible provider leaves with exact channel accounting.
The two abstractions consequently meet at an evidence interface: closure
criteria define what a leaf must prove, while the protocol establishes and
collects those facts under crash, replay, and partition.

The provider-boundary effect-closure work formalizes provider-local effect
closure, including future-use, instance, and lineage closure and an explicit effect frontier
\cite{santosgrueiro2026effectclosure}. The protocol uses this class of result as a leaf
certificate rather than reproducing provider semantics at the global layer.
Its contribution is the cross-provider root cutset: manifest completeness,
root/epoch compatibility, fence watermarks, alternate-root receipts, and
balanced inter-leaf channel tokens. The construction also shows why the union
of locally quiescent statements is unsound without a consistent cut.

CONTINUITY specifies assume--guarantee contracts that preserve authenticated
security context across heterogeneous controls and requires each realized
effect to retain a current end-to-end authorization witness
\cite{continuity2026}. The protocol adopts the same general lesson that
locally valid controls do not compose automatically, but proves a different
temporal predicate. It installs an active retirement cut for one root, closes
cut-relevant historical commitments at provider-local barriers, preserves only
exact alternate-root rebinds, and balances channel tokens before issuing a
negative authorization claim: the retired atom cannot support another
protected acceptance within the manifest.

\subsection{Distributed termination and snapshots}

Dijkstra--Scholten termination detection proves passivity and message
exhaustion for a diffusing computation; Mattern develops message-counting
algorithms for asynchronous, potentially non-FIFO communication; and
Chandy--Lamport snapshots capture a consistent global state including channel
contents \cite{dijkstra1980termination,mattern1987termination,
chandy1985snapshot}. The protocol imports their treatment of in-flight
messages and cut consistency. Its target property is different: authority is
scoped to a selected root witness, protected sinks enforce an epoch fence, and
a process may remain active under an unrelated root. Durable non-process
carriers---credentials, remote jobs, triggers, and prepared effects---also
remain relevant even when all observed processes are passive.

\subsection{Access-control foundations}

Least privilege and complete mediation require authority to be both minimized
and checked at the protected operation \cite{saltzer1975protection}.
Zero-trust architecture likewise places policy enforcement near resources
rather than inferring trust from network position \cite{nist800207}.
UCONABC extends one-shot access decisions with ongoing authorization,
obligations, conditions, and mutable attributes for long-lived usage
\cite{park2004uconabc}. The protocol specializes that continuing-control
principle to retirement: it also accounts for work admitted before the final
fence, which a later access denial cannot retroactively erase. Leases provide
a complementary time-bounded right under explicit clock and failure
assumptions \cite{gray1989leases}; a lease expiry is therefore admissible only
when the registered profile proves those assumptions and covers derived work.
The protocol applies these principles to shutdown: a control-plane status
cannot substitute for a sink fence, and a credential or task is not considered
closed merely because its initiator disappeared. The resulting certificate is
an auditable authorization fact tied to a root, epoch, manifest, and set of
effect-mediating leaves.

The Authorization-Execution Gap frames a broader divergence between a
principal's intended authorization and realized execution, identifying
delegation incompleteness, channel corruption, and composition fragmentation
as structural sources and motivating authorization-integrity checks during
execution \cite{aeg2026}. The protocol instantiates a narrower protocol
question after one authorization root has been selected for retirement. It
does not infer principal intent or diagnose all three sources; it certifies the
cut, support transfer, provider obligations, and channel accounting required
for root-relative quiescence under an explicit policy and manifest.

Kubernetes finalizers and garbage collection illustrate durable, deferred
resource deletion with controller-owned cleanup keys and owner/dependent
lifecycle rules \cite{kubernetes2025finalizers,kubernetes2026garbagecollection}.
A finalizer can instantiate one provider leaf, but clearing it does not by
itself cover external effects, authorization support, or channels into other
providers.

\subsection{Positioning summary}

The closest mechanisms solve complementary slices. Cancellation requests a
state change; token revocation closes one credential; process exit observes one
runtime; graph revocation computes surviving authority; provider closure
proves a local effect frontier; evidence frameworks validate declared closure;
and termination detection closes process/message activity for one modeled
computation. The protocol composes the missing conjunction: an executed
root-epoch cut, precise multi-root support transfer, provider-local frontier
certificates, and exact cross-provider channel closure. Its certificate claims
only that conjunction and does not rename any adjacent result as global
shutdown.

\section{Assurance Boundary}
\label{sec:assurance}

The root-scoped quiescence protocol makes a strong root-relative claim: cut $c$ targets $(r,e)$, retires epoch
$e$, advances issuer state to $e+1$, and a valid certificate excludes every
future registered protected-sink acceptance whose selected support contains
the retired atom. Any cut-relevant old-atom acceptance before its provider-local
fence is bound to an exact pre-fence commitment receipt finalized after that
fence. Continuing work is permitted only when an
atomic rebind receipt selects an independently sufficient witness that excludes
the retired atom. The claim follows from the formal predicate and the deployment premises
below; it does not depend on interpreting cancellation, silence, or timeout as
success.

\subsection{Certified property and non-equivalent statements}

The certificate means \emph{root-scoped authorization quiescence}. Four nearby
statements are deliberately kept distinct:

\begin{itemize}
  \item \emph{Global idleness} would require every process and task to stop.
  Root-scoped quiescence allows shared work to continue under an independent
  root.

  \item \emph{Business completion} would require the task's objective and
  external workflow to reach a domain-specific terminal outcome. The
  certificate addresses the ability of the retired root to cause a new
  protected commit.

  \item \emph{Rollback} would undo effects committed before the applicable
  provider-local fence.
  The protocol records such effects as pre-fence committed; compensation is a
  separate authorized workflow. Classical saga execution makes compensation
  an explicit application-level action for a long-lived transaction, rather
  than an automatic consequence of abort or revocation
  \cite{garciamolina1987sagas}.

  \item \emph{Universal physical cessation} would cover uninstrumented and
  unmediated real-world consequences. The certificate covers the registered
  authority graph, endpoints, channels, and protected sinks named by its
  manifest.
\end{itemize}

These distinctions strengthen auditability: a verifier can determine exactly
which proposition was certified rather than inferring a broader notion of
``stopped.''

\subsection{Deployment premises}

\begin{table}[t]
\caption{Premises for the root-scoped no-false-quiescence result.}
\label{tab:quiescence-premises}
\small
\begin{tabularx}{\textwidth}{p{0.20\textwidth} X X}
\toprule
Premise & Required deployment fact & Result when the fact is unavailable \\
\midrule
Durable cut linearization & the root epoch transition and issuance freeze have one durable linearization point and survive restart & no positive certificate for that cut \\
Support-sound projection & every admitted live authority instance exposes its minimal sufficient root-support witnesses and selected witness & uncovered or ambiguous support is \textsc{Indeterminate} \\
No root laundering & a continuing instance changes support only through an atomic receipt selecting a current witness that excludes the retired atom & the instance remains dependent on the retired atom \\
Complete carrier coverage & credentials, tasks, descendants, triggers, reservations, prepared operations, and other registered persistence mechanisms are enumerated or soundly upper-bounded & uncovered carriers remain open or unknown \\
Sink mediation and fencing & every protected commit crosses a registered sink that rejects the retired epoch after installing the named cut fence & an unfenced or bypassable sink cannot sign a positive leaf \\
Leaf authenticity and compatibility & leaf signatures, identities, manifests, cut identifiers, configurations, and watermarks are authenticated and mutually compatible & conflicting or replayed leaves are rejected \\
Exact channel accounting & every cross-leaf authority or obligation transfer has a unique authenticated pre-cut SEND in the frozen token frontier and a matched terminal ACK; ACCEPT binds a covered destination carrier before its terminal accounting & unmatched, late-created, duplicated, causally inverted, or unbound tokens prevent global quiescence \\
Crash-safe evidence state & ledger transitions, fence state, terminal receipts, and certificate issuance are replay-safe and durably ordered & recovery resumes in nonterminal or indeterminate state \\
Lease-time validity & expiry-based closure uses an authenticated time source and the profile's stated skew and propagation bound & expiry is not accepted as closure evidence \\
\bottomrule
\end{tabularx}
\end{table}

The trusted computing base contains the durable cut ledger, certificate
verifier, cryptographic keys, registered adapter logic, authoritative state
observations, and the protected sinks that enforce fences. Agent plans, task
messages, cancellation responses, and provider data not authenticated by a
registered profile are untrusted inputs. An adapter may narrow its claim when
the provider exposes less state; it may not fill missing evidence with a
successful status.

\subsection{Threats covered by the protocol}

The safety model includes an agent that continues after cancellation, creates
descendants immediately before shutdown, replays old credentials or
certificates, duplicates requests, and attempts to preserve retired authority
through a shared principal. It includes message reordering and duplication,
controller crash and restart, delayed receipts, cancellation--completion
races, stale observations, and network partitions. These events can delay a
certificate or force \textsc{Indeterminate}; under the premises in
Table~\ref{tab:quiescence-premises}, they cannot justify a false positive.

Compromise of the certificate signing key, protected sink, authoritative
provider state, host kernel, or registered adapter violates a named premise.
Likewise, an external endpoint that offers no sound terminal query, no fence,
no bounded expiry, and no independently observable effect state is opaque for
this purpose. The correct protocol result for such a leaf is
\textsc{Indeterminate}. This is also the information-theoretic boundary: two
remote executions can present the same observations while only one retains a
late effect, so no observer restricted to those observations can distinguish
them soundly.

\subsection{Evidence-layer interpretation}

The formal proof establishes safety for all traces satisfying the model
premises. The executable state machine demonstrates that the registered
protocol and fault schedules realize those transitions. The separately
implemented trace checker validates event-chain integrity and registered invariants without
calling the controller's decision procedure. Provider or runtime adapters
establish only the profiles they implement. A successful finite evaluation
therefore supports conformance of those executions; the universal statement
comes from the conditional proof, and deployment admission comes from checking
the premises for each leaf.

Each evidence layer has a distinct role. Authenticated observations establish
provider facts, exact channel accounting establishes the global cut, registered
executions establish the tested protocol paths, the conditional proof supplies
the model-wide result, and signatures bind assertions to their issuers. The
certificate composes these layers through cut identity, manifest, configuration
digest, watermarks, and leaf digests.

\subsection{Safety, availability, and conditional completion}

No-false-quiescence is a safety property. Certificate availability is a
separate liveness property. Eventual certification requires that every
required adapter and sink recover, fences eventually install, messages are
eventually delivered or reach an authenticated terminal state, leases expire
within their admitted bounds, and fair retry makes durable progress. A
permanent partition or permanently opaque endpoint can therefore prevent a
certificate while preserving safety.

Operationally, the three outcomes have fixed meanings. \textsc{Quiescent}
authorizes reliance on the root-cut property. \textsc{Not-Quiescent} identifies
known work that must be closed or rebound. \textsc{Indeterminate} identifies an
evidence or observability gap that must be resolved or isolated. Policy may
escalate or contain either nonpositive result, but may not relabel it as
quiescence.

\section{Conclusion}
\label{sec:conclusion}

Stopping a long-running agent is an authorization problem, not a synonym for
closing its initiating process. Cancellation acknowledgement, token
revocation, and process exit each close a useful local object, yet durable
tasks, derived credentials, triggers, prepared effects, and cross-provider
messages can preserve the retiring root's ability to commit. Provider-local
closure likewise becomes a global result only when the leaves describe one
compatible cut and account for work crossing between them.

The protocol establishes that cut explicitly. A durable root-epoch
transition freezes new issuance; operation-time fences block the old epoch at
protected sinks; an outstanding authority--obligation graph tracks persistent
carriers; provider leaves certify local frontiers and watermarks; and channel
tokens close the cross-provider boundary. Minimal root-support antichains and
atomic rebind receipts preserve valid shared work without transferring the
retiring root through another principal.

The resulting certificate has a precise meaning: within its authenticated
coverage manifest and mediation premises, every cut-relevant acceptance
authorized by the retired atom and preceding its provider-local fence is
terminally accounted, and no registered protected-sink acceptance ordered
after its fence can select that atom. Known residual work is
\textsc{Not-Quiescent}; missing, conflicting, or opaque evidence is
\textsc{Indeterminate}. Neither cancellation success nor elapsed time can
silently promote either state to quiescence.

This separation gives heterogeneous agent systems a compositional shutdown
contract. Task protocols continue to manage lifecycle intent, credential
issuers continue to revoke tokens, provider adapters continue to prove local
effect closure, and distributed evidence verifiers continue to validate their
declared predicates. The protocol binds those facts to one root-scoped
authorization cut, making ``stopped'' a verifiable security assertion rather
than a control-plane impression.

\appendix
\section{Proofs of the Quiescence Results}
\label{app:proofs}

We reason over finite prefixes of accepted traces of the transition system in
Section~\ref{sec:formal}.  A rejected operation may append a diagnostic event
but does not alter roots, selected support, carrier authority, fences,
frontiers, or terminal channel dispositions.  Cryptographic and storage
claims are invoked only through the numbered assumptions in
Table~\ref{tab:quiescence-assumptions}.

\subsection{Trace notation and preservation facts}

Write
\[
 S_0\xrightarrow{o_1}S_1\xrightarrow{o_2}\cdots
 \xrightarrow{o_n}S_n
\]
for an accepted trace.  Events within the root ledger are ordered by
\(\preceq_r\); events within endpoint or sink \(p\) are ordered by
\(\preceq_p\).  Cross-domain reasoning uses only the causal happens-before
relation \(\prec\) of Section~\ref{sec:formal}.  In particular, no proof step
compares a root-ledger position with a provider position.  A transition
\emph{selects} atom \(a\) if its authenticated authorization record names a
selected witness containing \(a\).  A1 mediates issuer transitions, whereas
A5 mediates protected commitment admissions through the corresponding
provider-local barrier.  By A3, A5, and A9, root epochs, cuts, barriers, and
terminal evidence are monotone in their respective domains.

Two elementary preservation facts will be used repeatedly.  First, no
transition other than a registered authorization or rebind transition changes
\(\operatorname{sel}(x)\).  Second, a terminal acknowledgement for token
identifier \(tid\) is inserted under a uniqueness constraint binding the full
canonical token.  Replaying the same acknowledgement is idempotent; changing
its disposition is a conflict, not a second valid state.  By A1 and A7, an
old-atom token-creation record precedes the cut, and an accepted acknowledgement
precedes every terminal transition of its covered destination carrier.

\subsection{Cut and support lemmas}

\begin{proof}[Proof of Lemma~\ref{lem:cut-nonissuance}]
Let \(i\in\mathcal I_r\) be the first issuer transition after \(c\) in
\(\preceq_r\) that attempts to issue, delegate, reactivate, or enlarge
authority with a selected witness containing
\(a_c=\langle r,e,h_g\rangle\).  The atomic transaction for \(c\) retires
epoch \(e\), advances the stored root epoch to \(e+1\), and places the root in
\(\mathsf{DRAINING}\).  A3 makes those changes monotone.  A1 requires \(i\) to
validate the exact grant digest and epoch at its own root-ledger linearization
point; A2 prevents substitution of a different canonical atom.  Thus \(a_c\)
is not current for \(i\), and acceptance contradicts A1.

This argument is intentionally limited to issuer-side non-expansion.  An
already authorized carrier can retain \(a_c\) and reach a sink after \(c\)
but before that sink installs \(b_p(c)\).  Such an admission is classified in
\(\preceq_p\), must precede \(b_p(c)\), and must receive the typed receipt and
closed frontier required by A5.  An attempt after \(b_p(c)\) is rejected by
the sink fence.  Delivery or settlement following an accepted admission is
part of that same commitment event and is not a new issuer transition.
\end{proof}

\begin{proof}[Proof of Lemma~\ref{lem:support-projection}]
Consider the authenticated carrier-provenance graph in creation order.  For a
carrier \(x\) present when the cut commits, A4 requires its complete minimal support antichain and
actual selected witness to be stored with its creation or latest authorization
record.  Therefore \(a_c\in\operatorname{sel}(x)\) places \(x\) in the base
set of \(D_c\).  A different witness in \(\Sigma(x)\) cannot change this fact:
by Definition~\ref{def:support-antichain}, alternatives are distinct
conjunctive supports, while \(\operatorname{sel}(x)\) names the one that
actually authorized the current carrier.

An outstanding cut-time token whose recorded support contains \(a_c\) is an
explicit base edge even if its destination carrier has not yet materialized.
A7 either gives that token a rejecting/expiry disposition or binds acceptance
to a covered destination carrier, which the closure definition of \(D_c\)
places in the projection.

For the induction step, suppose every dependent carrier created through event
\(j\) is in the projection.  If such a carrier emits a cross-endpoint handoff,
A7 gives the handoff one canonical token and binds its selected support digest.
On acceptance, the destination record names that token and projects its
support into the new carrier.  The closure rule for \(D_c\) therefore includes
the destination carrier.  If the token is rejected or expires, no destination
carrier is created.  If its disposition is absent, the emitted token remains
an explicit unresolved edge; it is not interpreted as absence.  A6 ensures
that the endpoint and channel class appear in the frozen manifest, including
carriers materialized after the cut but before the stable-scan watermark.

Induction covers every finite creation prefix.  A4 maps missing or ambiguous
provenance to \(\mathsf{UNKNOWN}\), so the verifier reports \(U\) instead of
deleting the node.  Thus every carrier and accepted handoff selected under
\(a_c\) is included, while discarding a nonselected alternative witness cannot
discard the selected dependency.
\end{proof}

\begin{proof}[Proof of Lemma~\ref{lem:no-laundering}]
The only nonterminal transition that removes \(x\) from the dependent
projection is \(\mathsf{Rebind}(x,W',c)\).  Its guard verifies that \(W'\) is a
complete registered minimal support, that \(a_c\notin W'\), and that every atom
and derivation edge in \(W'\) is current.  A8 further requires a fresh
authorization decision and one atomic transaction that changes the selected
witness and carrier epoch while emitting \(\rho_x\).  Consequently there is
no intermediate accepted state in which the carrier continues under neither
witness, or in which it claims \(W'\) while exercising the old witness.
The same guard compares the canonical operation and effect-envelope digests
with the carrier's pre-transition record; any substitution or widening fails,
and \(\rho_x\) binds the unchanged values.

A label edit changes none of the authenticated fields in \(\rho_x\).  A parent
edit breaks its provenance digest.  Selecting an alternative that still
contains \(a_c\), contains a stale atom, or is not in \(\Sigma(x)\) fails the
guard.  Finally, descendants not covered by the atomic profile retain their
old selected records and remain in \(D_c\).  Therefore a successful departure
from \(D_c\) proves a complete current derivation independent of the retired
atom, rather than merely renaming the dependency.
\end{proof}

\subsection{Composite safety and necessary evidence}

\begin{proof}[Proof of Theorem~\ref{thm:composition-soundness}]
Assume the verifier issues \(\mathcal C_c\).  By
Definition~\ref{def:verdict}, all leaves have the same compatibility header,
the frozen manifest is complete, every required scope is covered, all
signatures and watermarks verify, every required provider barrier and frontier
is present, and exact channel conservation holds.  Each barrier authenticates
the cut before installation, so \(c\prec b_p(c)\), without any comparison of
root and provider sequence numbers.  Lemma~\ref{lem:support-projection}
therefore makes the union of the leaf projections complete for \(D_c\).

First consider the certified prefix.  Suppose a sink \(p\) accepted a
commitment \(k\) selected under \(a_c\) with
\(b_p(c)\prec_p k\).  A5 requires the local barrier to reject precisely that
admission.  Moreover, the authenticated local log and stable frontier expose
an acceptance after the barrier as a concrete \(N\) witness.  Either fact
contradicts issuance of \(Q\).  Every cut-relevant acceptance under \(a_c\)
ordered before \(b_p(c)\), including one causally after \(c\) but before barrier
installation, must instead appear in \(J_p\) with its exact
\(\rho^{\mathrm{pfc}}_{p,k}\).  Verification checks the common local domain,
operation and effect digests, strict causal order
\(k\prec_p b_p(c)\prec\rho^{\mathrm{pfc}}_{p,k}\), authenticated admission and
barrier positions, and closure of the future-use, instance, and lineage
frontier.  The leaf evidence-through boundary is later than the barrier,
receipt, and every referenced terminal acknowledgement, so a leaf cannot be
signed first and completed retrospectively.  Missing or
conflicting evidence yields \(U\), while contradictory evidence yields \(N\);
neither permits the certificate.

Now consider any admissible extension after issuance and suppose it contains
the earliest new commitment authorized by \(a_c\), at sink \(p\).  The
monotone barrier \(b_p(c)\) already exists, so the new admission is later in
\(\preceq_p\) and A5 rejects it.  The carrier cases independently reach the
same conclusion.  An effect-closed or terminal carrier cannot reach a commit
gateway.  A \(\mathsf{PRE\_FENCE\_COMMITTED}\) carrier can settle only the
exact effects named by its typed receipts because its future-use, instance,
and lineage frontier is closed.  An independently rebound carrier's selected
witness excludes \(a_c\) by Lemma~\ref{lem:no-laundering}.  Lemma~\ref{lem:cut-nonissuance}
also excludes creation or expansion of new issuer-side authority under
\(a_c\).

A cross-endpoint handoff cannot escape these cases.  Exact projected-token
equality gives every token in the frozen frontier one valid terminal
disposition, and every acceptance maps causally to a covered destination
carrier before its terminal accounting.  By A7 and
\(\operatorname{ChannelsClosed}(E)\), the sole authoritative creation event for
each old-atom token is \(\mathsf{SEND}(t)\),
\(\mathsf{SEND}(t)\prec c\), and \(t\in T_c\).  Every relevant endpoint barrier
causally follows the authenticated cut, so
\(\mathsf{SEND}(t)\prec c\prec b_p(c)\).  After the barrier, the
registered channel profile rejects new ingress selected under \(a_c\) or records a concrete
blocker.  Consequently every dependent carrier has a certified terminal,
closed, \(\mathsf{PRE\_FENCE\_COMMITTED}\), or independently rebound state,
all old-atom handoffs in \(T_c\) are accounted for, every cut-relevant old-atom pre-fence
acceptance is typed and closed, and no post-barrier admission under \(a_c\)
exists in the prefix or any extension.
These are precisely the clauses of \(\operatorname{RQ}(c)\).
\end{proof}

\begin{proof}[Proof of Theorem~\ref{thm:boundary}]
For part (1), consider two manifested endpoints \(p\) and \(q\).  Endpoint
\(p\) has closed its local carrier after emitting token \(t\), and exports
\(\mathsf{SEND}(t)\in O_p\).  Before delivery, \(q\)'s stable local inventory contains no
carrier.  Both endpoints can truthfully report that their local carrier sets
are closed, yet \(t\) can subsequently create a destination carrier if no
terminal disposition or destination fence is required.  The local statements
alone are identical to a trace in which \(t\) was rejected.  The composite
distinguishes the traces only by requiring a unique
\(\mathsf{ACK}(t,d)\) and, for acceptance, coverage of the resulting carrier.
Equivalently, composition requires equality of the canonical token projections
\(\pi_t(O(E))\) and \(\pi_t(B(E))\); it does not equate the differently typed
send and acknowledgement records.
Therefore local closure does not compose without token conservation.

For part (2), fix all visible evidence and consider an opaque reachable
endpoint \(p\).  In world \(w_0\), \(p\) has no residual root-dependent
carrier.  In world \(w_1\), it has one delayed callback holding \(a_c\) and
able to commit a protected effect.  By hypothesis, \(p\) exposes neither an
inventory capable of distinguishing these states nor a fence/frontier that
eliminates the callback's effect path.  Every observation available to a
verifier outside \(p\) is therefore the same in \(w_0\) and \(w_1\).

Any deterministic or randomized verifier receiving the same observation has
the same output distribution in the two worlds.  If it returns \(Q\), it is
unsound in \(w_1\); if it never returns \(Q\), it is incomplete in \(w_0\).
No observer of the remaining evidence is both sound and complete.  Recording
the missing endpoint as \(U\) preserves soundness and identifies the exact
evidence or enforcement capability needed to resolve the verdict.
\end{proof}

\subsection{Preservation, algebra, and recovery}

\begin{proof}[Proof of Theorem~\ref{thm:independent-support-preservation}]
By premise, \(W'\) is a registered sufficient witness, excludes \(a_c\), and
is current.  Definition~\ref{def:rebind} re-runs authorization under \(W'\)
and atomically installs it as the selected witness.  The resulting carrier
therefore has exercisable authority derived from every atom in \(W'\), not
from \(a_c\).  Retiring \(a_c\) changes neither the epoch nor the grant digest
of an atom in \(W'\), so the cut cannot invalidate that derivation.

The rebind receipt proves the support change to the leaf verifier, permitting
the carrier to leave \(D_c\) without being terminated.  The composite
certificate names \(a_c\) and is evaluated only against dependencies selected
under that atom; it contains no transition revoking an atom in \(W'\).  Thus
the protocol removes exactly the retired root's contribution while preserving
independently supported work.  A later change to an atom in \(W'\) is handled
by its own epoch transition and is outside this certificate's effect.
\end{proof}

\begin{proof}[Proof of Theorem~\ref{thm:merge-algebra}]
Fix compatibility header \(\chi\).  For each leaf identifier, the merge rule
is the scalar operation
\[
 \bot\sqcup z=z,\qquad z\sqcup z=z,\qquad
 z\sqcup z'=\mathsf{conflict}\quad(z\ne z'),
\]
with \(\mathsf{conflict}\sqcup z=\mathsf{conflict}\).  This is the join on a
flat domain whose bottom is absence and whose top is authenticated conflict.
Its case definition is associative, commutative, and idempotent.  Finite
pointwise products of such joins retain all three properties.  Header
disagreement maps to the same absorbing conflict element, so it does not
change the argument.

The canonical composite serialization sorts leaf identifiers and token keys;
therefore equal maps have equal composite digests independent of message
arrival order.  Definition~\ref{def:verdict} is a pure function of that map,
its verified facts, and the frozen manifest.  Duplicate delivery changes none
of them.  Different authenticated bodies under one leaf identifier remain at
\(\mathsf{conflict}\), which triggers \(U\) and cannot be overwritten by a
later duplicate.  The digest and verdict are consequently merge-order
independent.
\end{proof}

\begin{proof}[Proof of Theorem~\ref{thm:crash-replay}]
Partition a crash around each security-changing transaction's durable
linearization point.  Before the point, recovery observes no transition and a
retry may execute normally.  After the point, recovery observes the complete
transition and the same idempotency key returns its committed result.  A3 and
A9 make epoch advances, cut records, fences, terminal carrier states, closed
frontiers, and terminal token dispositions monotone.  A retry proposing a
different body under the same key is authenticated conflict, never a
replacement.

An old grant request names retired epoch \(e\), so every later issuer
transition fails after the cut advances \(r\) to \(e+1\).  A replayed carrier
message is handled separately at its manifested endpoint: before the local
barrier it remains an explicitly accounted transfer or acceptance, and after
the barrier it is rejected.  A replayed fence or close record is idempotent.  A
second token disposition violates the unique canonical token binding and
becomes a conflict.  A rebind receipt is bound to \(cid\), carrier id, old and
new support digests, exact operation and effect-envelope digests, profile
digest, and a single-use transition key; replay can neither rebind another
carrier, widen its effect envelope, nor create a second accepted transition.

For certificate issuance, the authenticated certificate binds \(qid\) and
\(h_q\), and the durable issuance record commits
\((cid,qid,h_q,H(E),s_c,\mathcal C_c)\), where \(h_q\) digests the canonical
pre-certificate body.
Repeating that exact identifier and candidate digest finds
the durable record and returns the identical signature and certificate
identity.  Reusing \(qid\) with a different body is an idempotency conflict;
using another request identifier for the already completed \(cid\) is a
conflicting remint.  A9 rejects both cases.  Certificate replay therefore
states the same historical fact and cannot upgrade different evidence or mint
a second certificate for the generation.

For locality, let root identifier \(r'\ne r\) be absent from the frozen
manifest dependencies and from \(\operatorname{rid}(W)\) for every selected
witness \(W\) represented by \(E\).  Advancing \(r'\)'s epoch or changing its
ledger phase alters none of \(\chi\), the leaf bodies,
channel reconciliation, rebind receipts, or frontier facts on which
\(V_c(E)\) depends.  The canonical projection used to verify
\(\mathcal C_c\) is unchanged.  Hence unrelated-root advancement cannot
invalidate the root-relative certificate, while any shared selected witness
would, by definition, make the root related and require fresh evidence.
\end{proof}

\subsection{Conditional convergence}

\begin{proof}[Proof of Theorem~\ref{thm:conditional-liveness}]
Lemma~\ref{lem:cut-nonissuance} prevents issuer-side creation or expansion of
authority selected under \(a_c\) after \(c\).  A10 and fair execution install
every manifested \(b_p(c)\); A5 then prevents further commitment admissions
under \(a_c\) at each sink.  Pre-cut tokens delivered or accepted while barriers
are propagating remain in \(D_c\) through their exact channel records rather
than escaping the population.  A11 directly supplies finiteness of the complete
cut-dependent population of carriers, effects, and tokens, and ensures that
each emitted token is eventually delivered to a covered destination or obtains
an authenticated fence-reject or enforced-expiry disposition.

After the last required barrier is durable, consider the finite set of
unresolved cut obligations.  It contains one obligation for every dependent
carrier not yet closed, terminal, \(\mathsf{PRE\_FENCE\_COMMITTED}\), or
rebound; every token without a valid terminal disposition; every missing
fence/frontier attestation; every transient pre-certification unknown or
resolvable state disagreement; and every missing compatible leaf scope.  Token acceptance transfers
the same causal obligation from in transit to a covered destination carrier,
so it does not remove the obligation from this set prematurely.  Under A7,
A10, A11, and the theorem premises, fair execution eventually discharges each
member.  Monotonicity prevents a discharged terminal fact from reopening, and
A10 clears transient pre-certification unknowns and state disagreements.  An
authenticated same-key equivocation is absorbing for the generation and would
permanently preclude \(Q\); the theorem's no-equivocation premise excludes that
case rather than treating it as clearable evidence.

The resulting compatible evidence covers every manifest scope and every
provider frontier.  Each acceptance in \(K_c^p\) precedes its local barrier and
has an exact typed receipt, every continuing carrier has an admissible state,
and
\(\pi_t(O(E))=\pi_t(B(E))\) with each acceptance linked to a covered carrier.
A5 excludes an accepted post-barrier commitment, so there is no \(N\) witness;
A10 leaves no \(U\) fact.  Definition~\ref{def:verdict} returns \(Q\), and the
verifier issues \(\mathcal C_c\).  Before all obligations discharge, a
concrete open element produces \(N\) or a missing, opaque, or conflicting fact
produces \(U\); neither authorizes a certificate.
\end{proof}

\bibliographystyle{ACM-Reference-Format}
\setlength{\bibsep}{-0.2pt}
\bibliography{references}

\end{document}